\documentclass[12pt]{article}

\newcommand{\dtcolornote}[3][]{}

\usepackage{booktabs}
\usepackage{threeparttable}
\usepackage{siunitx}
\usepackage{graphicx}
\usepackage{subcaption}
\usepackage{amsmath}
\usepackage{natbib}
\let\cite\citeyear

\usepackage{setspace}
\usepackage{tikz}\usetikzlibrary{arrows.meta, shapes.geometric, positioning}
\usetikzlibrary{shapes.geometric}\tikzset{
  main node/.style={
    ellipse,
    draw,
    font=\sffamily\small\bfseries
  }
}

\usepackage{mathtools}
\usepackage{algorithm}
\usepackage{algpseudocode}
\usepackage{indentfirst}
\usepackage{pythonhighlight}
\usepackage{multirow}
\usepackage[T1]{fontenc}
\usepackage{amsmath,amssymb}
\usepackage{float}
\usepackage{hyperref}
\usepackage{xurl}
\usepackage{xcolor}

\usepackage{amsfonts}
\usepackage[utf8]{inputenc}
\usepackage{amsthm}

\usepackage{longtable}
\usepackage{booktabs, caption, makecell}
\usepackage{threeparttable}
\usepackage{graphicx}
\usepackage{parskip}
\usepackage{adjustbox}
\usepackage{tabularx}
\usepackage{listings}
\usepackage{multirow}
\usepackage{titlesec}

\definecolor{rune}{HTML}{4A6672}

\theoremstyle{plain}
\newtheorem{proposition}{Proposition}[section]

\newtheorem{theorem}{Theorem}[section]

\newtheorem{lemma}[theorem]{Lemma}
\DeclareMathAlphabet{\mathpzc}{OT1}{pzc}{f}{it}

\definecolor{darkgreen}{rgb}{0,0.6,0}

\usepackage[top=1.25in, bottom=1.25in, left=1.25in, right=1.25in]{geometry}

\title{A Model and Estimation of the Bitcoin Transaction Fee}

\author{Daniel J. Aronoff\thanks{MIT Department of Economics}, Kristian Praizner\thanks{MIT EECS}, Armin Sabouri\thanks{MIT Media Lab Digital Currency Initiative}}

\begin{document}
\maketitle

\pagenumbering{roman}
\begin{abstract}

Bitcoin transaction fees will become more important as the block subsidy declines, but fee formation is hard to study with blockchain data alone because the relevant queueing environment is unobserved. We develop and estimate a structural model of Bitcoin fee choice that treats the mempool as a market for scarce blockspace. We assemble a novel, high-frequency mempool panel, from a self-run Bitcoin node that records transaction arrivals, exits, block inclusion, fee-bumping events, and congestion snapshots. We characterize the fee market as a Vickery-Clarke-Groves mechanism and derive an equation to estimate fees. In the first-stage we estimate a monotone delay technology linking fee-rate priority and network state to expected confirmation delay. We then estimate how fees respond to that delay technology and to transaction characteristics. We find that congestion is the main determinant of delay; that the marginal value of priority is priced in fees, which is increasing in the gradient of confirmation time reduction per movement up in the fee queue; and that transactor choice of RBF, CPFP, and block conditions have economically important effects on fees.


\end{abstract}

\bigskip
\noindent\textit{Keywords:} Bitcoin; transaction fees; mechanism design; fee market; machine learning; blockchain.\\
\noindent\textit{JEL:} D44, D47, C45, C52, L86.\\
\noindent\textit{ACM CCS:} Applied computing $\rightarrow$ Electronic commerce; Theory of computation $\rightarrow$ Algorithmic mechanism design; Computing methodologies $\rightarrow$ Machine learning; Security and privacy $\rightarrow$ Distributed systems security; Mathematics of computing $\rightarrow$ Regression analysis.\\
\noindent\textit{MSC 2020:} 91B26; 62J05; 62G08; 91B24; 68M20.

\pagebreak

\pagebreak

\newpage
\pagenumbering{arabic}
\onehalfspacing

\section{Introduction}

Bitcoin is built on a proof-of-work protocol that requires miners to assemble transactions into blocks and compete to solve a puzzle in a lottery where the winner appends its block to the chain and earns a block reward. The block reward is the mining revenue which provides incentive for a miner to invest in hashrate (puzzle guesses per unit time) to solve the puzzle. Hashrate consumes electricity, which has a private cost to the miner.
The block reward serves two functions. One function is to ensure the liveness of the blockchain - i.e. that miners append transactions to each block and that a new block is added approximately every 10 minutes - by ensuring that computers are engaged in mining at all times. The other function is to enhance network security by erecting a threshold expected cost that an attacker, who builds a fork chain, must exceed to make its fork the consensus chain.
The equilibrium condition for Bitcoin mining implies that hashrate increases and decreases with the value (usually expressed in USD) of the block reward 
Figure \ref{fig:electricity_rev} displays the empirical relationship between the block reward (in USD) and energy consumption over time. The monotonicity of the relationship validates the prediction of the theoretical model.

\begin{figure}[H]
\begin{center}
\includegraphics[page=1,width=0.9\textwidth,height = 0.41\textheight]{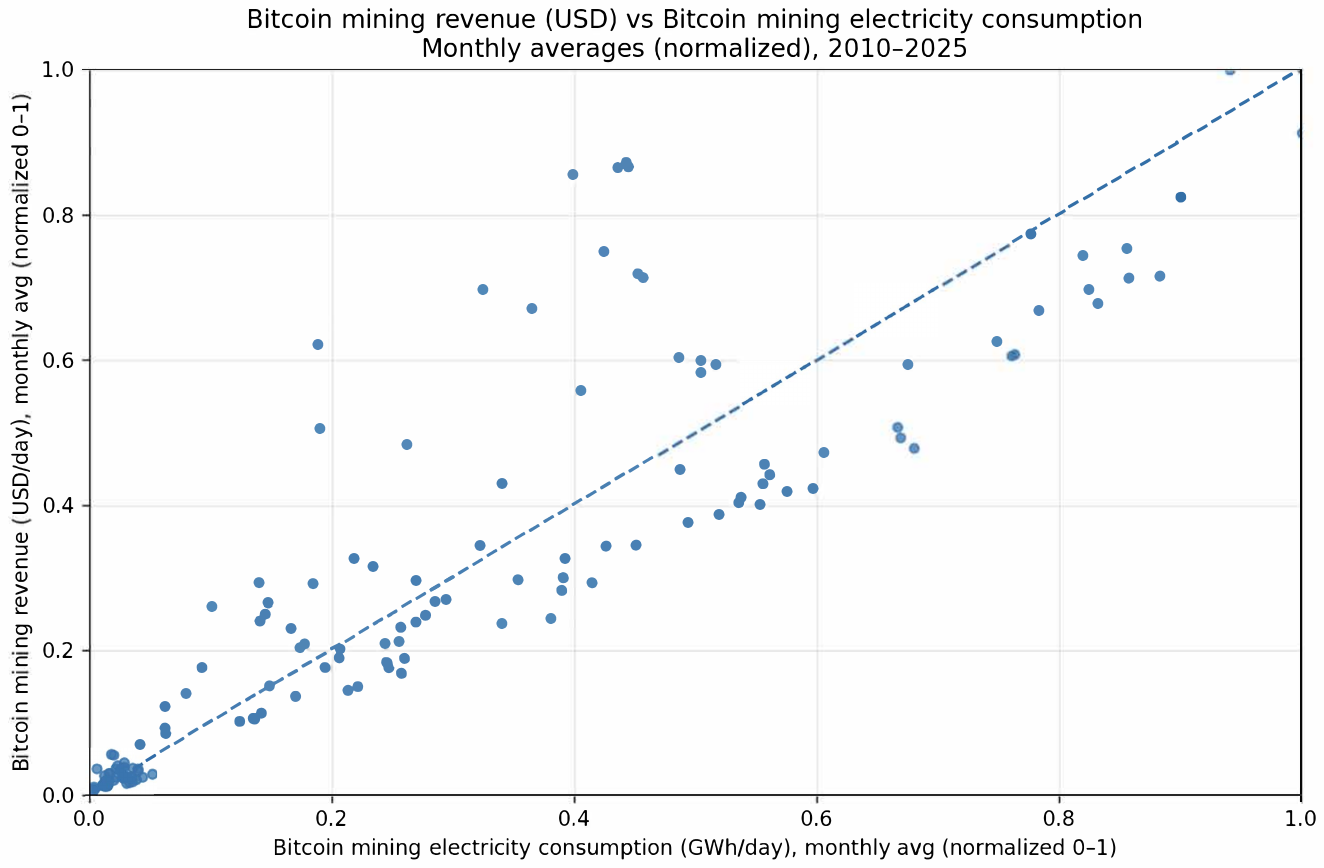}
\end{center}
\caption{Source: Cambridge Centre for Alternative Energy}
\label{fig:electricity_rev}
\end{figure}

The block reward has two parts: an algorithmic subsidy and voluntary transaction fees. The subsidy is programmed to fall through periodic halvings, so a larger share of miner revenue will eventually have to come from fees. A transactor’s motivation to pay a fee is to secure a priority position in the transaction queue; a higher fee induces the miner to append the transaction to its block ahead of a lower fee transaction, thereby shortening the expected confirmation time. A useful model of the fee market will recover two distinct objects: the mapping from priority into delay, and the user-side willingness to pay for a shorter wait.

We study that problem with a novel transaction-level mempool data-set collected from a self-run Bitcoin Core node operated by the  Medai Lab Digital Currency Initiative. The data record transaction entry into and exit from the mempool, block inclusion, replace-by-fee events, child-pays-for-parent relationships, and high-frequency snapshots of congestion. Those data enable us to reconstruct the queueing environment each transaction faced when its fee was chosen, which is the crucial object missing from standard blockchain-only data. Our empirical strategy follows the economics closely. In the first stage, we estimate expected confirmation delay as a flexible function of fee-rate priority and network conditions, impose the economically sensible restriction that higher priority cannot raise expected delay, and recover the local slope of the delay schedule. In the second stage, we relate fees to that estimated slope and to transaction characteristics in a log-fee equation with epoch fixed effects. We also examine a re-spend-based impatience proxy in supplementary specifications, but the baseline estimates do not depend on it.

The main findings are straightforward. Congestion is the dominant determinant of confirmation delay, and the delay schedule  -- i.e. expected delay between size-ranked fees -- becomes steeper when the mempool is congested Fees move with that technology in the direction implied by the model: when a marginal increase in priority buys a larger reduction in delay, users pay more for it. Several institutional details matter as well. RBF-tagged transactions pay large premia, CPFP packages are priced differently from stand-alone transactions, and fees rise with blockspace utilization, mempool depth, and the time since the previous block. The model is most informative within epochs, where it explains a substantial share of cross-sectional fee variation; epoch-to-epoch shifts in the overall fee level are harder to account for and are largely absorbed by fixed effects. 

One surprising result is that  the number of blocks to re-spend UTXO's is not informative. M\"oser and B\"ohme \cite{Bohme2015} found this measure to be correlated with the presence of a transaction fee on pre-2015 Bitcoin data. Intuitively, a fast turnaround by a receiver of UTXO's should be a motive to pay a fee to ensure the transaction is processed quickly. We expected to find this correlation to hold in our data, but it does not. 

This paper sits between several literatures. \citep{Huberman2021} provide the closest theoretical starting point by modeling blockchain fees as payments for priority in a queue. \citep{Easley2019} study fee payment using lower-frequency reduced-form data, while \citep{lehar2020miner} use transaction-level data to study miner market power. Our contribution is to take a structural priority-pricing framework to transaction-level mempool data.

The mempool data spans only a little more than nine weeks, so long-run extrapolation requires caution, particularly because the sample comes from a period of relatively stable network operation, there is limited variation in the explanatory variables of our model. We validate this by comparing the variance of key on-chain variables in our sample to the prior three years. Our remedy is to continue running the DCI Bitcoin node to lengthen our sample over time.

The remainder of the paper proceeds as follows. Section \ref{sec:background} describes Bitcoin fee formation and the mempool. Section \ref{sec:model} develops the model. Section \ref{sec:data_measurement} describes the data. Section \ref{sec:estimation} presents the empirical strategy. Section \ref{sec:results} reports the results. Section \ref{sec:conclusion} concludes.

\section{Background}
\label{sec:background}

This section summarizes the institutional features of Bitcoin that govern fee formation. We clarify what economic object a transaction fee represents, how miners select transactions into blocks, and why confirmation delay is a function of transaction priority and network congestion. These mechanics motivate the sparse model in Section~\ref{sec:model} and the empirical design in Section~\ref{sec:estimation}.

\subsection{Transactions, blockspace, and the block reward}
\label{subsec:bg_blockreward}

A Bitcoin payment is implemented by a transaction that spends existing unspent outputs (UTXOs) and creates new outputs.\footnote{An output specifies a spending condition (a script) and an amount. Outputs that have not been spent are UTXOs.} Transactions are broadcast to the peer-to-peer network and, if valid, are stored by nodes as unconfirmed transactions.

Miners assemble valid unconfirmed transactions into a candidate block and compete to solve a proof-of-work puzzle. The first miner to find a valid solution propagates its block; other nodes accept it if it satisfies consensus rules. The winning miner earns the block reward, which has two components: (i) an algorithmically determined subsidy (paid in a special ``coinbase'' transaction)
and (ii) the sum of transaction fees from the transactions included in the block.

Blockspace is scarce. Bitcoin constrains block capacity using a weight-based rule, so transactions differ in the amount of capacity they consume. This institutional detail makes fee comparisons across transactions naturally scale by size: a miner deciding which transactions to include ranks profitability by comparing revenue per unit of blockspace rather than revenue per transaction.\footnote{Strategic behavir by miners can, in certain circumstances, cause them to choose a lower profit transaction \citep{lehar2020miner}.}

\subsection{What a transaction fee is and why fee rate is the relevant price}
\label{subsec:bg_fee}

A transaction fee is voluntary and is determined by the sender. Mechanically, it equals the difference between the total value of inputs and the total value of outputs.\footnote{The fee is not an explicit field in the transaction format. A miner can infer it from inputs and outputs.} Because transactions differ in how much blockspace they consume, the relevant price for priority is the fee per unit of size. We therefore work with the fee rate

\[ r \equiv \frac{b}{\mathrm{Weight}},\]

where $b$ is the fee and $\mathrm{Weight}$ is the transaction's blockspace footprint. In practice, both wallets and miners quote fee rates (often in satoshis per unit of weight) rather than fee levels.\footnote{See Section \ref{sec:data_measurement} for the technical definition of weight.}

The economic interpretation is immediate: blockspace is the scarce resource, and the fee rate is the marginal willingness to pay for that resource. Holding the fee rate fixed, a larger transaction pays a higher fee because it uses more blockspace. This scaling is the reason transaction weight enters multiplicatively in the model in Section~\ref{sec:model} and as a control in the estimation in
Section~\ref{sec:estimation}.

\subsection{The mempool as a priority queue}
\label{subsec:bg_mempool}

Nodes maintain a mempool: a local inventory of valid, unconfirmed transactions. The mempool is not a centralized object; it is replicated across nodes and can differ slightly across them because of propagation delays and node-specific policies. Nevertheless, for liquid periods with broad connectivity, miners and users face a common environment in which pending transactions compete for inclusion in future blocks.

From the standpoint of fee formation, the mempool behaves like a priority queue with random service times. The service opportunities are block arrivals, which occur stochastically, and the service capacity of each block is limited by the block weight constraint. When congestion is low, most valid transactions are confirmed quickly and the marginal value of priority is small. When congestion is high, expected delay is sensitive to priority and small fee-rate differences can
meaningfully affect confirmation outcomes.

These mechanics motivate two empirical objects that appear throughout the paper.
First, a transaction's priority can be represented by its fee-rate rank (or percentile) among pending or newly arriving transactions in a short interval. Second, confirmation delay can be summarized by an expected delay function that depends on both priority and the contemporaneous state of congestion (backlog) and capacity utilization.

\subsection{Block assembly and fee-based selection}
\label{subsec:bg_blockassembly}

Given a set of unconfirmed transactions, a miner chooses which transactions to include subject to the block weight limit. Under competitive mining and absent strategic withholding, the dominant selection rule is fee-based: miners prefer transactions that deliver higher expected fee revenue per unit of blockspace. In the baseline case of our theoretical model (Section \ref{sec:model} each transaction is independent with identical size (in terms of weight, not fee). This implies sorting by fee rate and filling block capacity from the top of the list.
Two practical features complicate the one-transaction-at-a-time description but preserve the fee-rate logic at the relevant margin.

\textbf{First}, transactions can depend on one another. A child transaction that spends an unconfirmed parent output is not valid for inclusion unless the parent is included. As a result, miners evaluate packages of related transactions using an effective fee rate based on combined fees and combined weight.

\textbf{Second}, senders can adjust fees after broadcast using widely used fee-bumping mechanisms. Replace-by-fee (RBF) allows a sender to replace an unconfirmed transaction with a higher-fee version. Child-pays-for-parent (CPFP) allows a sender or receiver to attach a high fee to a child transaction, inducing miners to include both parent and child because the package fee rate is attractive. These mechanisms sharpen the economic interpretation of observed fees: a fee is a choice variable used to purchase priority in a congested queue, potentially updated in response to
realized waiting.

These considerations motivate the additional indicators used in the empirical specification in Section~\ref{sec:estimation}. In particular, $\mathrm{RBF}$ and $\mathrm{CPFP}$ capture fee-bumping and package-selection behavior that affects the mapping between an individual transaction's posted fee rate and its effective priority.

\subsection{Implications for modeling and estimation}
\label{subsec:bg_implications}

The institutional details above imply three restrictions that guide the remainder of the paper.

\begin{enumerate}
\item the relevant object for priority is the fee rate, not the fee level, because the capacity constraint is weight-based. This is why Section~\ref{sec:model} indexes priority by fee-rate rank and why Section~\ref{sec:estimation} controls for transaction weight.

\item confirmation delay is jointly determined by priority and the state of the mempool. In the model in Section~\ref{sec:model}, we write expected delay as $W(p,s)$, where $p$ is a priority percentile and $s$ summarizes congestion and capacity conditions. In the estimation in Section~\ref{sec:estimation}, we approximate this mapping using flexible predictors of congestion and unused blockspace, together with transaction-level indicators that affect the mempool wait time of the transaction.

\item the economic force behind fee formation is the tradeoff between fees and delay. Wallets and transactors choose fees to target confirmation within a desired horizon, and miners select transactions using fee-based priority. The resulting fee schedule is therefore steep in periods of tight blockspace and flat in periods of slack blockspace. This is the core mechanism captured by the sparse model in Section~\ref{sec:model} and operationalized by the two-stage
estimator in Section~\ref{sec:estimation}.
\end{enumerate}



\section{The Transaction Fee Model}
\label{sec:model}

This section adapts the fee-setting framework in Huberman et.al. \cite{Huberman2021} to derive a parsimonious, estimable relationship between transaction fees, congestion, and  impatience (equivalently, delay cost). The economic object of interest is the fee a transactor attaches to a transaction to
obtain priority in the mempool and reduce confirmation delay.

\subsection{Environment and user problem}

Time is continuous. Transactions arrive to the mempool and are confirmed when included in a block. Blocks arrive as a Poisson process with rate $\mu$. Each block has a fixed capacity. In Huberman et.al. \cite{Huberman2021} the capacity is a fixed number of transactions and all transactions are the same size, so priority can be indexed by
the fee $b$. In the Bitcoin protocol, block capacity is a weight constraint and transactions differ in  weight; miners therefore prioritize by fee rate. Throughout we index priority by the fee rate
$r_{it} \equiv b_{it}/\mathrm{Weight}_{it}$, where $i$ indexes the transactor and $t$ indexed time.\footnote{The transaction fee is part of the bargain between counterparties, who share the surplus. it is paid by the sender, whom we refer to as the ``transactor''.}

A transactor $i$ has a willingness-to-pay $R_{it}$ to use Bitcoin relative to its best alternative, and a per-block delay cost $c_{it}$.
If the transaction is confirmed after expected delay $W_{it}$ (in blocks) and pays fee $b_{it}$, expected surplus is

\begin{equation}
u_{it} \;=\; R_{it} - c_{it} W_{it} - b_{it}.
\label{eq:util}
\end{equation}

The willingness-to-pay term aggregates all non-delay attributes of the payment system (including, in particular, the value of settlement finality and perceived reliability) relative to the outside option.
Conditional on choosing Bitcoin, transactor $i$ chooses a fee rate $r_{it}\ge 0$ (equivalently, a fee
$b_{it}=r_{it}\mathrm{Weight}_{it}$) to maximize \eqref{eq:util}.\footnote{Subject to $u_{it} \geq 0$.}

\subsection{Priority rule and delay as a function of rank}

A key implication of the Huberman et.al. \cite{Huberman2021} model is that, under competitive mining and free entry, miners adopt the myopic revenue-maximizing selection rule: each mined block includes the highest-paying pending transactions up to capacity. In their equal-size benchmark, this rule is “highest fees”; in our setting it is “highest fee rates.” Let $G_t(\cdot)$ denote the cross-sectional distribution of fee rates among relevant mempool transactions in epoch $t$. Define the priority percentile

\begin{equation}
p_{it} \equiv \frac{\mathrm{rank}_t(r_{it})}{N_t}\in(0,1),
\label{eq:priority_def_s3}
\end{equation}

where $\mathrm{rank}_t(\cdot)$ ranks fee rates in ascending order so that higher $p_{it}$ means higher priority, and $N_{t}$ is the number of transactions in the mempool at time $t$. Under the priority rule, expected delay is decreasing in priority. We therefore write expected delay as a function of priority and the contemporaneous state of the
mempool,

\begin{equation}
W_{it} \;=\; \mathcal{W}(p_{it}, s_t),
\label{eq:delay_schedule}
\end{equation}

where $s_t$ summarizes congestion and capacity conditions (e.g., backlog, blockspace utilization,
and other predetermined state variables).

\subsection{Equilibrium pricing of priority and a useful decomposition}

The equilibrium has two properties that are central for our empirical strategy.

\textbf{First}, fees implement an efficient priority allocation: higher delay-cost transactors obtain higher priority and shorter expected delays.

\textbf{Second}, the equilibrium payments coincide with the Vickery-Groves-Clarke (``VCG'') payments for selling priority of service, so each transactor pays the expected delay externality imposed on lower-priority transactions.

An immediate implication is that, provided willingness-to-pay is sufficiently high so that participation constraints do not bind, equilibrium fees are independent of $R_{it}$: changes in
outside options shift surplus but do not shift the priority-pricing schedule. This motivates our focus on fee choice conditional on observing that the transaction is submitted and confirmed.

For estimation, the VCG interpretation yields a convenient reduced-form decomposition.\footnote{See Appendix \ref{app:vcg_indirect} for a proof and an explanation of why VCG is an appropriate characterization of the Bitcoin transaction fee market.} Locally, raising priority by a small amount reduces expected delay by approximately $\mathcal{W}_p(p_{it}, s_t)\,dp$. Since the transactors value a reduction in delay, we write  $D(p_{it},s_t) = - \mathcal{W}_p =  - \partial \mathcal{W}(p_{it}, s_t)/\partial p < 0$, where $  \mathcal{W}_{p}$ denotes the marginal reduction in expected confirmation delay generated by an increase in priority, i.e. the marginal reduction in confirmation delay achieved by moving up one priority percentile, and $D(p_{it},s_t)$ is the absolute value of the local slope of the delay function. The marginal willingness to pay for priority is therefore proportional to the transactor’s delay cost $c_{it}$ times the reduction in confirmation time achieved by increasing priority by one position, $D(p_{it},s_t)$. This motivates the approximation

\begin{equation}
b_{it}\;\approx\;\kappa_t\cdot c_{it}\cdot D(p_{it},s_{t})\cdot \mathrm{Weight}_{it}
\label{eq:local_pricing}
\end{equation}

where $\kappa_t$ is a normalization constant that depends on the priority index and capacity constraint. Log linearization gives

\begin{equation}
\log b_{it}=\underbrace{\log c_{it}}_{\text{preferences}}+\underbrace{\log D(p_{it}, s_t)}_{\text{delay technology}}+\log \mathrm{Weight}_{it}+\text{const.}+\varepsilon_{it}.
\label{eq:log_decomp}
\end{equation}

Equation \eqref{eq:log_decomp} is the bridge to our empirical design: we require (i) a flexible estimate of the delay technology and its local slope in the prevailing state, and (ii) a flexible mapping from observables to the latent delay cost $c_{it}$. Section~\ref{sec:estimation} implements this logic using a two-stage estimator.


\section{Data and Measurement}
\label{sec:data_measurement}

\subsection{Data Sources}
\label{subsec:data}

Our empirical analysis relies on high--frequency mempool data collected from a self--run Bitcoin Core (version~30.0.0) node operating on Ubuntu~24.04.3~LTS.  The node runs with \verb|txindex=1|, RPC enabled and ZMQ enabled so that most valid transactions broadcasted to the peer--to-peer network are observed.

A separate Rust program, \emph{mempool--monitor} \footnote{https://github.com/arminsabouri/Mempool-Monitor/}, subscribes to the node's ZMQ \verb|rawtx| feed and receives a byte stream for each new transaction.  After decoding the raw transaction and pruning witness data, the monitor writes the raw transaction, the time of entry and exit and other metadata to a SQLite database.

In parallel, the monitor records the state of the node's mempool every $25$~seconds.  These snapshots capture the total mempool size (bytes), the number of pending transactions, and the current block height and hash.  The monitor also records \emph{Replace--By-Fee (RBF) events}: if a transaction that was previously seen is replaced with a higher--fee version, the new transaction is marked as an RBF and linked to the transaction it replaces. The Mempool--monitor  additionally identifies \emph{child--pays-for--parent (CPFP)} relationships by tracking chains of unconfirmed transactions; this information allows fee rates to be aggregated at the package level when parents and children are mined together. We track the time to re-spend the output as the number blocks between confirmation of the subject transaction and confirmation of transactions that use the output. Finally, we post-process and normalize the data before training. \footnote{https://github.com/mit-dci/bitcoin-fee-estimation}

Data collection began on \textbf{3~August~2025 \ at 06:27:44} (UTC$-04$) and ended on \textbf{9~October~2025 \ at 20:08:49} (UTC$-04$). Our raw data can be found on Kaggle. \footnote{https://www.kaggle.com/datasets/kristianpraizner/mempool-space-data/data}

\subsection{Preliminary Observations}

Figure 2 is a chart of the fee distribution of in our data, sorted by fee order. Upon visual inspection it has a convex shape that tends to increase in the order (from left to right). This motivates our choice of a log form of the VCG equation (Equation \ref{eq:log_decomp}) as the basis of our estimating equation.

\begin{figure}[h]
  \centering
  \includegraphics[width=\linewidth]{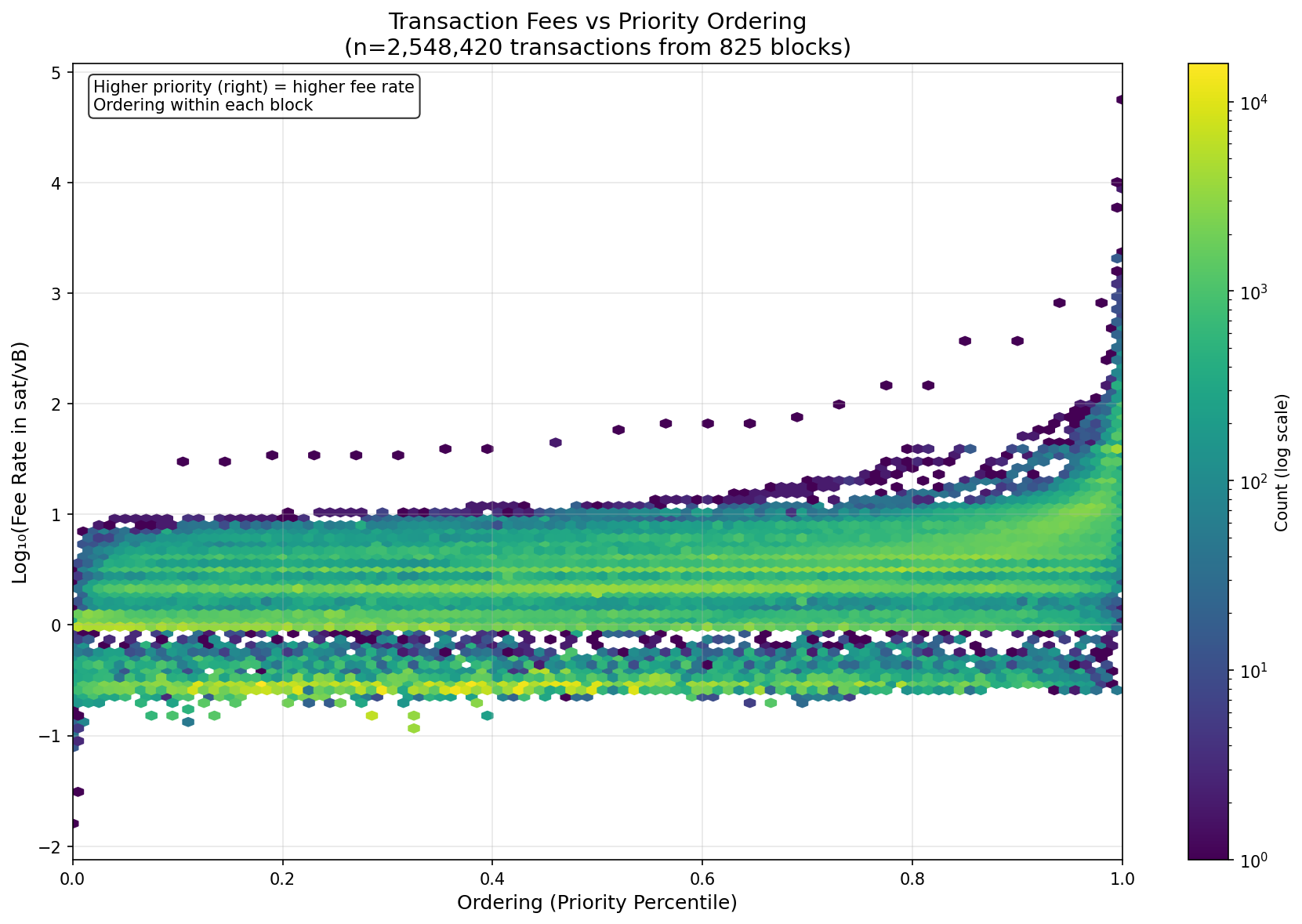}
  \caption{Log$_{10}$ fee rate (sat/vB) plotted against priority percentile for 2,548,420 confirmed transactions drawn from 825 blocks. Each point represents a transaction, colored by count density (log scale). Higher priority percentile (right) corresponds to higher fee rate, as transactions are ordered within each block by descending fee rate. The convex, upward-sloping relationship motivates the log-linear specification in Equation 5.}
  \label{fig:log_fee_vs_ordering}
\end{figure}


\section{Empirical Strategy and Estimation}
\label{sec:estimation}

This section estimates the fee-setting relationships implied by the model in
Section~\ref{sec:model}, specifically Equation \ref{eq:log_decomp}. The theoretical framework is intentionally sparse: it treats expected confirmation delay as a function of transaction priority and the mempool state, and it relates equilibrium fees to users' delay costs and to the local slope of the delay schedule. In the data, however, transaction fees also vary systematically with observables such as transaction weight, wallet and transaction type, and high-frequency network conditions. Our empirical strategy therefore augments the model with these controls while preserving its core decomposition between (i) delay technology and (ii) user preferences.

We proceed in two stages. In the first stage, we estimate expected confirmation
delay as a flexible function of priority and network state using a random forest,
and we recover an observation-level estimate of the local slope of this delay
schedule. In the second stage, we estimate a log fee equation that flexibly
maps an impatience proxy into fees using a cubic B-spline with four knots
(degree $3$), while controlling for the estimated local slope and additional
fee shifters. We implemented with Python code from the scikit-learn library. \footnote{
\label{fn:sklearn}
\textbf{scikit-learn implementation.}
The first stage uses \path{sklearn.ensemble.RandomForestRegressor}. The second stage uses a spline-basis expansion via \path{sklearn.preprocessing.SplineTransformer} with \path{degree=3} and \path{n_knots=4}, followed by OLS via \path{sklearn.linear_model.LinearRegression}, typically combined as a pipeline using \path{sklearn.pipeline.make_pipeline}. The scikit-learn documentation pages for these classes include a \texttt{[source]} link that points to the corresponding source code in the scikit-learn repository.}

\subsection{Data structure and notation}
\label{subsec:data_notation}

We index transactions by $i$ and discrete time intervals by $t$.
A time interval is an epoch (30 minutes in the baseline), chosen to balance two considerations: it is short enough that the state of the mempool is approximately stable within an interval and long enough to deliver sufficient within-epoch variation in priority.

Let $b_{it}$ denote the transaction fee paid by transaction $i$ (in USD) and let $\mathrm{Weight}_{it}$ denote its transaction weight.
Define the fee-rate

\begin{equation}
r_{it} \equiv \frac{b_{it}}{\mathrm{Weight}_{it}}.
\label{eq:feerate_def}
\end{equation}

Let $W_{it}$ denote the realized confirmation delay, measured as the number of blocks between mempool entry and inclusion. Let $\hat\rho_t$ denote a measure of congestion in epoch $t$ (e.g., the amount of mempool backlog in weight units) and let $\mathrm{Blockspace}_t$ denote available block capacity during the epoch. Let $\mathrm{Type}_{it}$ denote transaction-type indicators (including exchange-flow tags and NFT tags), and let $\mathrm{Wallet}_{it}$ denote wallet indicators when attribution is available.

Transactor impatience is not observed directly. Following the measurement strategy in Section~\ref{sec:data_measurement}, we use re-spend timing to proxy impatience. Let $d_{it}$ be the average number of blocks until the first $n$ output re-spends associated with transaction $i$, truncated at $X$ blocks as described above. Define the impatience proxy

\begin{equation}
\iota_{it} \equiv d_{it}^{-1}.
\label{eq:impatience_def}
\end{equation}

\paragraph{Variables.} We partition the variables into ``transaction variables'' $X_{it}$, and  ``state variables'' $S_{t}$, where $i$ indicates a transaction and $t$ is a (time interval) epoch.

$X_{it} =  \text{\Large}\{\mathrm{Blockspace}_t,\log \mathrm{Weight}_{it},
\mathrm{RBF}_{it},\mathrm{CPFP}_{it},\mathrm{Type}_{it},\mathrm{TOA}_{it}\text{\Large}\}$

$S_{t} = \text{\Large}\{\widehat\rho_t, \mathrm{Blockspace}_t\text{\Large}\}$

The concatenated bundle is denoted $Z_{it} \equiv \text{\Large}(X_{it}, S_{t}\text{\Large})$.

\subsection{Choice of estimators, shape restrictions, and the hurdle specification}
  \label{sec:method_choice}                                                                                                    
                                                                   
  Our empirical objective is to estimate the decomposition implied by the priority-pricing model: observed fees reflect both
  (i) a delay technology mapping priority and network state into expected confirmation delay and (ii) heterogeneous preferences
   over delay. This objective motivates a two-stage design that treats the delay technology as a nuisance function estimated
  flexibly, while keeping the fee equation structured enough to support interpretation and counterfactual analysis.

  \paragraph{First stage: flexible delay technology.}
  The delay technology is generated by a fee-prioritized queue with stochastic block arrivals, time-varying capacity, and
  heterogeneous transaction packages. Rather than impose a global parametric form, we estimate the conditional mean delay
  function using a Random Forest trained on an 80/20 random split of the data. The fitted model is then used to generate
  predicted delays for all observations, including those in the training set. Because the second stage includes epoch fixed
  effects that absorb aggregate time variation, and because the delay gradient---not the level of predicted delay---is the
  generated regressor of interest, the bias from in-sample prediction is attenuated in practice. We return to temporal
  stability and out-of-sample performance in Section \ref{sec:results:temporal}.

  \paragraph{Shape restriction: monotonicity in priority.}
  Queuing structure implies that, holding network state fixed, higher priority should not increase expected delay. Flexible
  learners need not respect this restriction. We therefore enforce monotonicity in the priority dimension when constructing the
   delay gradient that enters the second stage. For each epoch, we hold non-priority features at their epoch median, evaluate
  the fitted Random Forest over a fine priority grid, and apply isotonic regression to obtain a monotone decreasing schedule.
  Finite differences of this smooth schedule yield the local slope $\hat{W}'_{it}$ at each transaction's observed priority.
  This epoch-median approach isolates the causal effect of priority from variation in congestion features within the epoch, and
   avoids amplifying first-stage noise through per-observation partial derivatives.

  \paragraph{Second stage: the fee equation.}
  Transaction fees exhibit a large mass near the relay minimum and a steep right tail. In principle, a hurdle specification---separating the decision to pay a nontrivial fee from the amount paid---could capture both margins. In
  practice, fee payment is near-universal: fewer than $0.01\%$ of confirmed transactions fall below the effective minimum fee
  rate, reflecting the consensus rule that miners select by fee rate. We therefore estimate only the intensive margin, modeling $\log(\text{fee rate})$ conditional on positive fees via OLS with epoch fixed effects. The near-degenerate extensive margin contributes negligibly and is not separately modeled.

  \paragraph{State-based controls and counterfactual analysis.}
  Counterfactual exercises require a representation of the common fee environment that can be manipulated under alternative
  network conditions. Fully saturated epoch fixed effects absorb common shocks in-sample but do not map directly to
  counterfactual states. We therefore augment epoch effects with predetermined state variables capturing congestion, capacity,
  and block-arrival conditions---specifically, blockspace utilization, log time since the last block, and log mempool size.
  This preserves the role of high-frequency common shocks while explicitly estimating how observables shift the fee
  environment, which is the component we vary in counterfactual scenarios.

\subsection{First stage: estimating the delay technology}
\label{sec:firststage}

Priority is summarized by a transaction’s position in the fee-rate ordering among transactions that enter the relevant ranking set in epoch $t$. Let $N_t$ denote the number of transactions in this set. Because fee-rates are discretized (e.g., in sat/vB) and package behavior can generate identical fee-rates, ties are common. Since ties correspond to multiple distinct transactions competing for inclusion, we construct a tie-aware percentile rank that preserves the full transaction count.

\paragraph{Tie-aware fee-rate percentile.}
Let $r_{it}$ denote the fee-rate of transaction $i$ in epoch $t$. Define the fee-rate percentile

\begin{equation}p_{it}
\equiv
\frac{
\#\{j: r_{jt}<r_{it}\}
+\tfrac{1}{2}\#\{j: r_{jt}=r_{it}\}
}{N_t},
\label{eq:percentile_tieaware}
\end{equation}
so that $p_{it}\in(0,1)$ and all transactions with identical fee-rates receive the same percentile. This definition corresponds to the empirical cdf evaluated at the midpoint of the mass of ties and avoids boundary values that complicate finite differences.

\paragraph{Delay technology.}
We model confirmation delay as a flexible function of priority and predetermined system state:
\begin{equation}
W_{it} = g(p_{it}, Z_{it})+u_{it}.
\label{eq:firststage_g}
\end{equation}

The function $g(\cdot)$ is estimated using a random forest regressor, which accommodates sharp nonlinearities and interactions characteristic of fee-prioritized queues with stochastic block arrivals and time-varying capacity. 

\paragraph{Cross-fitting.}
Because the second stage uses generated regressors, we construct first-stage predictions out of sample using epoch-level cross-fitting. We partition the sample into $K$ folds at the epoch level, estimate the forest on $K-1$ folds, and predict delay in the held-out fold. Let $\widehat g^{(-k)}(\cdot)$ denote the fitted model excluding fold $k$. For observations $(i,t)$ in fold $k$, define the out-of-sample predicted delay
\begin{equation}
\widehat W_{it}\equiv\widehat g^{(-k)}(p_{it}, Z_{it}).
\label{eq:What_raw}
\end{equation}

\paragraph{Monotonicity in priority.}
Queueing structure implies that, holding state fixed, higher priority should not increase expected delay. Since standard random forests do not impose shape restrictions, we enforce monotonicity by post-processing out-of-sample forest predictions along the priority dimension.

Fix an observation $(i,t)$ and let $z_{it}$ collect the non-priority covariates in \eqref{eq:firststage_g}. Evaluate the fitted forest on a fine grid of percentiles $\mathcal{P}=\{p^{(1)},\ldots,p^{(M)}\}\subset(0,1)$:
\[
\widehat m^{(-k)}_{it}(p^{(m)})\equiv \widehat g^{(-k)}(p^{(m)},Z_{it}),\qquad m=1,\ldots,M.
\]
We compute the closest weakly decreasing sequence to $\{\widehat m^{(-k)}_{it}(p^{(m)})\}_{m=1}^M$ in squared error loss via isotonic regression. Let $\widetilde m^{(-k)}_{it}(p^{(m)})$ denote the fitted monotone sequence, and define the monotone first-stage predictor $\widetilde g^{(-k)}(p,z_{it})$ as the linear interpolation of the grid points

\[\{(p^{(m)},\widetilde m^{(-k)}_{it}(p^{(m)}))\}_{m=1}^M.\]

We replace \eqref{eq:What_raw} by the monotone prediction

\[\widetilde W_{it}\equiv \widetilde g^{(-k)}(p_{it},Z_{it}).\]

\paragraph{Local slope of the delay schedule.}
The fee condition in Section 3 depends on how expected delay changes locally with priority. We approximate the local slope using a symmetric finite difference computed from the monotone prediction function:

\begin{equation}
\widetilde W'_{it}\equiv\frac{\widetilde g^{(-k)}(\bar {p}_{it}^{+},Z_{it})\widetilde g^{(-k)}(\bar p_{it}^{-},Z_{it})}{\bar p_{it}^{+}\bar p_{it}^{-}},\qquad\bar p_{it}^{\pm}\equiv \min\{1-\underline p,\max\{\underline,p_{it}\pm \Delta\}\}.
\label{eq:Wprime_fd}
\end{equation}

Here $\underline p>0$ is a small trimming constant to avoid boundary artifacts and $\Delta>0$ is a bandwidth parameter. In the baseline we set $\underline p=0.01$ and $\Delta=0.01$ and verify robustness to alternative choices.

\paragraph{Robustness check: alternative monotone first-stage learner.}
To verify that the second-stage results are not driven by the particular learner used in the first stage, we repeat the first stage using a monotone gradient-boosted tree model that imposes a weakly decreasing constraint in $p_{it}$ by construction while allowing flexible interactions with $\widehat\rho_t$ and $\mathrm{Blockspace}_t$. Using the resulting out-of-sample slope estimates $\widehat W'^{\,\mathrm{MGB}}_{it}$, we re-estimate the second stage and show that the estimated impatience mapping and counterfactual implications are stable across first-stage specifications.

\subsection{Second stage: hurdle fee model with a monotone spline and state-based controls}
\label{sec:secondstage}

Observed fees exhibit a large mass near zero and a steep right tail. To accommodate this distributional feature while preserving the model-implied decomposition, we estimate a hurdle specification that separates (i) the extensive margin decision to pay a nontrivial fee from (ii) the fee level conditional on paying a nontrivial fee.

\paragraph{State-based controls and the common fee environment.}
Let $S_t$ collect predetermined, high-frequency measures of network state, such as congestion and capacity. We represent the common fee environment as

\[\eta_t = S_t'\theta + \xi_t,\]

where $\xi_t$ captures residual epoch variation orthogonal to $S_t$ and is normalized by $\mathbb{E}[\xi_t]=0$. This representation preserves the role of epoch-level common shocks while identifying an explicit component, $S_t'\theta$, that can be manipulated in counterfactual exercises by varying $S_t$.








We estimate a log-fee equation:

\begin{equation}
\log b_{it}=\alpha_0+s(\iota_{it})+\alpha_1 \log \widetilde W'_{it}+X_{it}'\beta+S_t'\theta+\xi_t+\varepsilon_{it},\qquad \text{for}\ \ b_{it}>\epsilon.
\label{eq:intensive_log}
\end{equation}

The coefficient $\alpha_1$ captures the sensitivity of fees to the local steepness of the delay schedule, as implied by the priority-pricing condition in Section 3.

\paragraph{Monotone spline for the impatience mapping.}
Economic structure implies that greater impatience should not reduce either (i) the propensity to pay a fee or (ii) the fee level among fee-paying transactions. We therefore estimate $h(\cdot)$ and $s(\cdot)$ under monotonicity restrictions using an I-spline basis. Let $\{\kappa_1,\ldots,\kappa_J\}$ denote knots placed at predetermined impatience points (baseline: quantiles of $\iota_{it}$ in the fee-relevant region with additional mass in the upper tail). Let $\{I_1(\iota),\ldots,I_L(\iota)\}$ denote the associated I-spline basis functions, which are weakly increasing in $\iota$. We parameterize

\begin{align}
s(\iota) &= \delta^{(s)}_0 + \sum_{\ell=1}^{L}\delta^{(s)}_\ell I_\ell(\iota), \qquad \delta^{(s)}_\ell \ge 0 \ \ \forall \ell\ge 1.
\label{eq:monotone_s}
\end{align}

The nonnegativity constraints imply that both $h(\cdot)$ and $s(\cdot)$ are weakly increasing.

\paragraph{Estimation.}

We estimate equation \eqref{eq:intensive_log} by constrained least squares with the monotonicity constraints in \eqref{eq:monotone_s}. Standard errors are computed allowing for sampling variability in the generated regressor $\log \widetilde W'_{it}$ by using  cross-fitting folds and clustering at the epoch level.

\paragraph{Counterfactual aggregation.}
For a counterfactual state $S_t^{cf}$, the hurdle model delivers two components: the predicted selection probability

\[\widehat \pi^{cf}_{it}\equiv \Pr(H_{it}=1\mid \iota_{it},\widetilde W'_{it},X_{it},S_t^{cf},\xi_t=0),\]

and the predicted conditional mean fee among fee-paying transactions. If \eqref{eq:intensive_log} is used to predict $m^{cf}_{it}\equiv \mathbb{E}[b_{it}\mid b_{it}>\epsilon,\cdot]$, we compute

\[\widehat m^{cf}_{it}=\exp\!\left(\widehat \mu^{cf}_{it}\right)\cdot \widehat \psi,\qquad\widehat\mu^{cf}_{it}=\widehat \alpha_0+\widehat s(\iota_{it})+\widehat \alpha_1 \log \widetilde W'_{it}+X_{it}'\widehat \beta+(S_t^{cf})'\widehat\theta,\]

where $\widehat \psi$ is a smearing factor estimated from the second-stage residuals. The implied unconditional expected fee is then

\[\mathbb{E}[b_{it}\mid \cdot]\approx\widehat \pi^{cf}_{it}\cdot \widehat m^{cf}_{it}+(1-\widehat \pi^{cf}_{it})\cdot \mathbb{E}[b_{it}\mid b_{it}\le\epsilon],
\]

where the final term is economically negligible under the baseline choice of $\epsilon$ and is set to the sample mean below $\epsilon$ unless otherwise stated.

\subsection{Identification and inference}
\label{subsec:identification}
\textbf{Identification} follows the maintained assumption that individual transactions are atomistic relative to the mempool: each transactor takes the prevailing mapping from priority to delay as given at the time of fee choice. Under this assumption, $\widehat W_{it}$ summarize the relevant technological constraint faced by transaction $i$ in epoch $t$. The spline component $s(\iota_{it})$ provides a flexible approximation to the mapping from impatience into fees. \textbf{Inference} is implemented by computing standard errors clustered at the epoch level to allow for within-epoch dependence. To incorporate first-stage estimation uncertainty, we also report an epoch-block bootstrap that resamples epochs and re-estimates both stages within each resample. See Appendix \ref{app:identification} for a formal analysis of identification and inference.

\section{Results}                                                                                                            
  \label{sec:results}                                                                                                          
                                                                                                                               
  \subsection{Data}
  \label{sec:results:data}

  Our sample covers \num{11110277} confirmed Bitcoin transactions observed
  between 3~August 2025 and 9~October 2025 (approximately ten weeks), drawn
  from a continuous mempool observer database.
  Transactions are grouped into \num{1988} non-overlapping 30-minute
  \emph{epochs}, which serve as the unit of observation for epoch fixed effects
  and clustered inference throughout the analysis.

  \paragraph{Weight correction.}
  Raw transaction weights recorded by the local Bitcoin node are systematically
  understated for SegWit inputs: the node counts witness bytes at full weight
  rather than the discounted rate.
  We correct this by merging with independently fetched data from the
  mempool.space API, which reports the true virtual size (\texttt{vsize}) and
  witness-adjusted weight.
  This correction is available for $99.2\%$ of transactions and raises the mean
  weight by $+272.5$~weight units~(WU), a $28.8\%$ average increase.
  Fee rates derived from corrected weights are materially lower for SegWit-heavy
  transaction types and are the figures used throughout.

  \paragraph{CPFP package collapsing.}
  Child-pays-for-parent (CPFP) arrangements create a mismatch between the fee
  rate a transaction \emph{appears} to offer and the rate that actually
  determines its priority in the miner's mempool.
  Following standard practice, we identify parent--child pairs and collapse each into a single \emph{package} observation whose fee rate equals the total
  package fee divided by the total package virtual size.
  This affects approximately $24.4\%$ of the final sample.

  \paragraph{Transaction composition.}
  We annotate each transaction with a structural label using a rule-based
  classifier.  Table~\ref{tab:tx_labels} reports the resulting distribution.
  The majority are simple payment transactions; data-carrying transactions
  (carrying \texttt{OP\_RETURN} outputs or inscriptions) account for roughly
  one fifth of the sample.

  \begin{table}[h]
    \centering
    \caption{Transaction label distribution (representative sample)}
    \label{tab:tx_labels}
    \begin{tabular}{lr}
      \toprule
      Label & Share (\%) \\
      \midrule
      Normal payment   & 69.6 \\
      Data-carrying    & 19.2 \\
      Consolidation    & \phantom{0}7.4 \\
      Batch payment    & \phantom{0}3.7 \\
      CoinJoin         & \phantom{0}0.2 \\
      \midrule
      Total (annotated) & 100.0 \\
      \bottomrule
    \end{tabular}
    \par\smallskip
    {\footnotesize \textit{Note}: Shares computed from a labelled subsample;
     ``unknown'' labels excluded.}
  \end{table}

  Additional sample characteristics: RBF-flagged transactions account for
  $5.3\%$ of observations; $24.4\%$ are CPFP packages; the fee rate
  distribution is heavily right-skewed, with a median of approximately
  $5$~sat/vB and a long tail of high-urgency transactions exceeding
  $100$~sat/vB.

  \subsection{Stage 1: Estimating the Delay Technology}
  \label{sec:results:stage1}

  The first stage estimates the \emph{delay technology}
  $\hat{W}(p_{it}, s_t)$: the expected log confirmation delay as a function of
  a transaction's priority percentile $p_{it}$ within its epoch and the
  prevailing mempool state~$s_t$.
  We use a Random Forest regressor because it flexibly captures the non-linear,
  interaction-heavy relationship between priority and delay without imposing a
  parametric form.

  \paragraph{Model specification.}
  The response variable is $\log(\text{waittime} + 1)$ in seconds.
  The feature set is $\{p_{it},\; \text{blockspace utilization},\;
  \text{mempool size},\; \text{mempool tx count}\}$, where priority percentile
  $p_{it}$ is computed within each epoch using a tie-aware rank so that the
  distribution is approximately uniform on $[0,1]$ within every 30-minute
  window.\footnote{The mempool features enter the Random Forest in levels rather
  than logs.  Because tree-based methods are invariant to monotone
  transformations of individual features, the choice is immaterial for the
  fitted predictions.}
  We use 200~trees, maximum depth~15, and minimum leaf size~20.
  A random 80/20 train--test split is used for evaluation.

  \paragraph{Performance.}
  On the held-out test set the model achieves $R^2 = 0.611$,
  RMSE $= 0.812$ (in log-seconds).
  The remaining unexplained variance is expected: confirmation timing depends on
  factors not observable in the mempool (inter-block intervals, miner selection
  policies, network propagation), so the model deliberately captures the
  \emph{systematic} component attributable to priority and congestion.

  \begin{figure}[H]
    \centering
    \includegraphics[width=\linewidth]{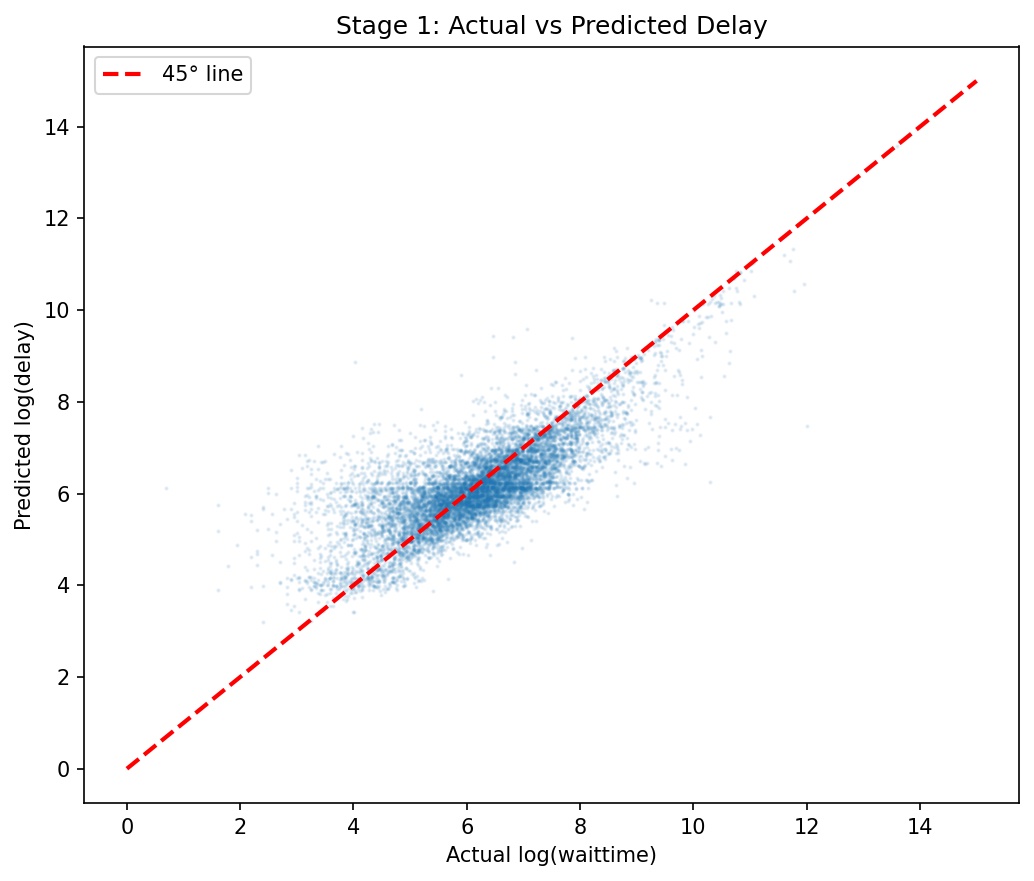}
    \caption{Actual vs.\ predicted $\log(\text{waittime})$ for a random
             $10{,}000$-observation subsample; the dashed line is the 45-degree
             reference.}
    \label{fig:stage1_diag}
  \end{figure}

  \paragraph{Feature importance.}
  Blockspace utilization dominates, accounting for $52.0\%$ of total feature
  importance in the fitted forest, followed by mempool transaction count
  ($24.6\%$), mempool size ($13.7\%$), and priority percentile ($9.7\%$).
  This ordering confirms that the level of network congestion---both within the
  current block and in the broader mempool---is the primary determinant of
  absolute confirmation delays, with a transaction's relative priority
  controlling the residual variation within a given congestion regime.
  Figure~\ref{fig:stage1_importance} illustrates the importances.

  \begin{figure}[h]
    \centering
    \includegraphics[width=\linewidth]{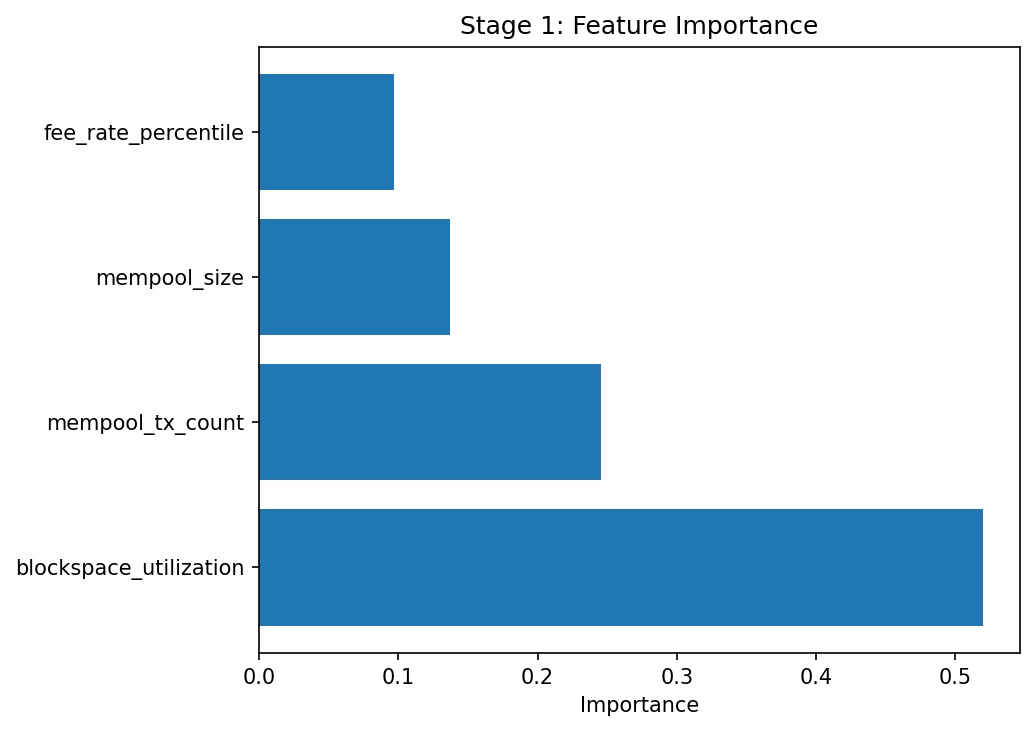}
    \caption{Random Forest feature importances for the Stage~1 delay technology
             model.  Blockspace utilization and mempool depth jointly capture
             aggregate congestion, while priority percentile captures a
             transaction's competitive position within the queue.}
    \label{fig:stage1_importance}
  \end{figure}

  \paragraph{Monotonicity enforcement and local slopes.}
  The VCG identification argument requires that $\hat{W}$ is weakly decreasing
  in priority~$p$: higher-priority transactions must (weakly) confirm faster.
  For each epoch, we hold the non-priority features at their epoch median,
  evaluate the fitted forest over a fine 99-point priority grid, and apply
  isotonic regression to obtain a monotone decreasing schedule.
  The key input to Stage~2 is the local slope
  $\hat{W}'_{it} = \partial \hat{W}/\partial p$, which we compute via symmetric
  finite differences with step size $\delta = 0.05$ on a linear interpolant of
  this monotone schedule.
  All slopes are constrained to be positive (higher priority $\Rightarrow$
  weakly lower delay).
  Figure~\ref{fig:stage1_congestion} shows the resulting delay--priority
  relationship stratified by congestion regime: the gradient is steeper during
  high-congestion periods, implying that priority is more valuable when the
  mempool is full.

  \begin{figure}[h]
    \centering
    \includegraphics[width=\linewidth]{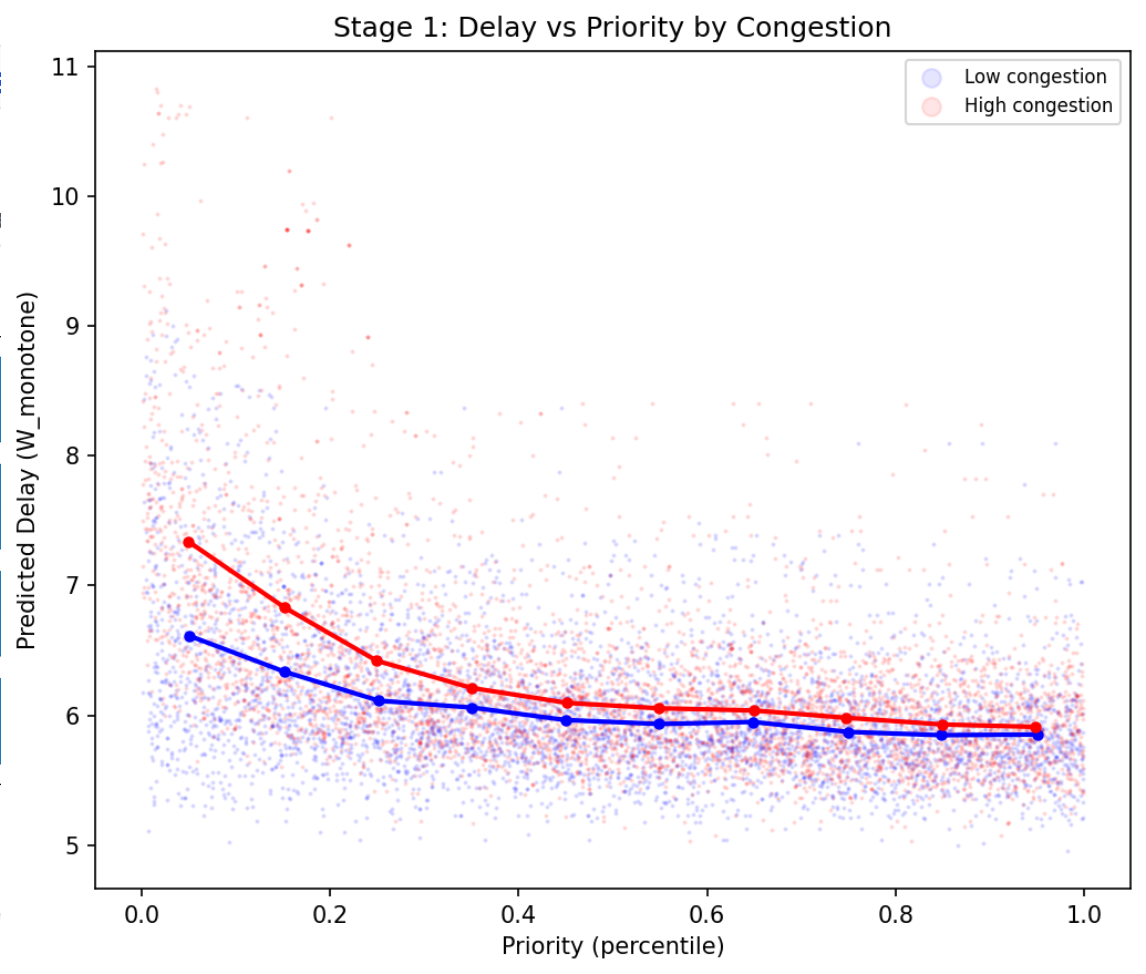}
    \caption{Delay--priority relationship stratified by mempool congestion
             regime.  The delay gradient $\hat{W}'$ is steeper during
             high-congestion periods, implying that priority is more valuable
             when the mempool is full.}
    \label{fig:stage1_congestion}
  \end{figure}
\paragraph{Delay gradient regimes.}
  The epoch-median sweep reveals that the delay technology is not uniformly
  informative.  Figure~\ref{fig:trivial_nontrivial} shows that $79.8\%$ of
  epochs ($7{,}980$ of $10{,}000$) exhibit a \emph{trivial} delay gradient:
  the monotone-enforced schedule is essentially flat across the priority grid,
  so $\hat{W}'_{it} \approx 0$ for all transactions in those epochs.  Only the
  remaining $20.2\%$ of epochs display an \emph{active} delay technology in
  which higher priority produces a measurably steeper reduction in expected
  delay.  The right panel confirms that the fee-rate distributions of the two
  regimes largely overlap, with the non-trivial regime showing a slightly
  heavier right tail.

  This pattern has a natural interpretation: during low-congestion periods the
  mempool clears quickly regardless of priority, so the Random Forest correctly
  learns a flat delay schedule---there is no priority gradient to price.  The
  structural channel identified by the VCG model is therefore operative
  primarily during congestion episodes, which is precisely when fee
  differentiation matters most.  In Stage~2, observations from trivial-gradient
  epochs still contribute to the estimation of control-variable and epoch
  fixed-effect coefficients, but the identifying variation for $\alpha_1$ is
  concentrated in the $20\%$ of epochs where the delay technology is active. We retain all epochs in the estimation rather than restricting to the non-trivial regime, so that the model is disciplined by the full data distribution and the coefficient estimates are not conditional on an ex-post sample selection that could inflate the apparent strength of the structural channel.

  \begin{figure}[h]
    \centering
    \includegraphics[width=\linewidth]{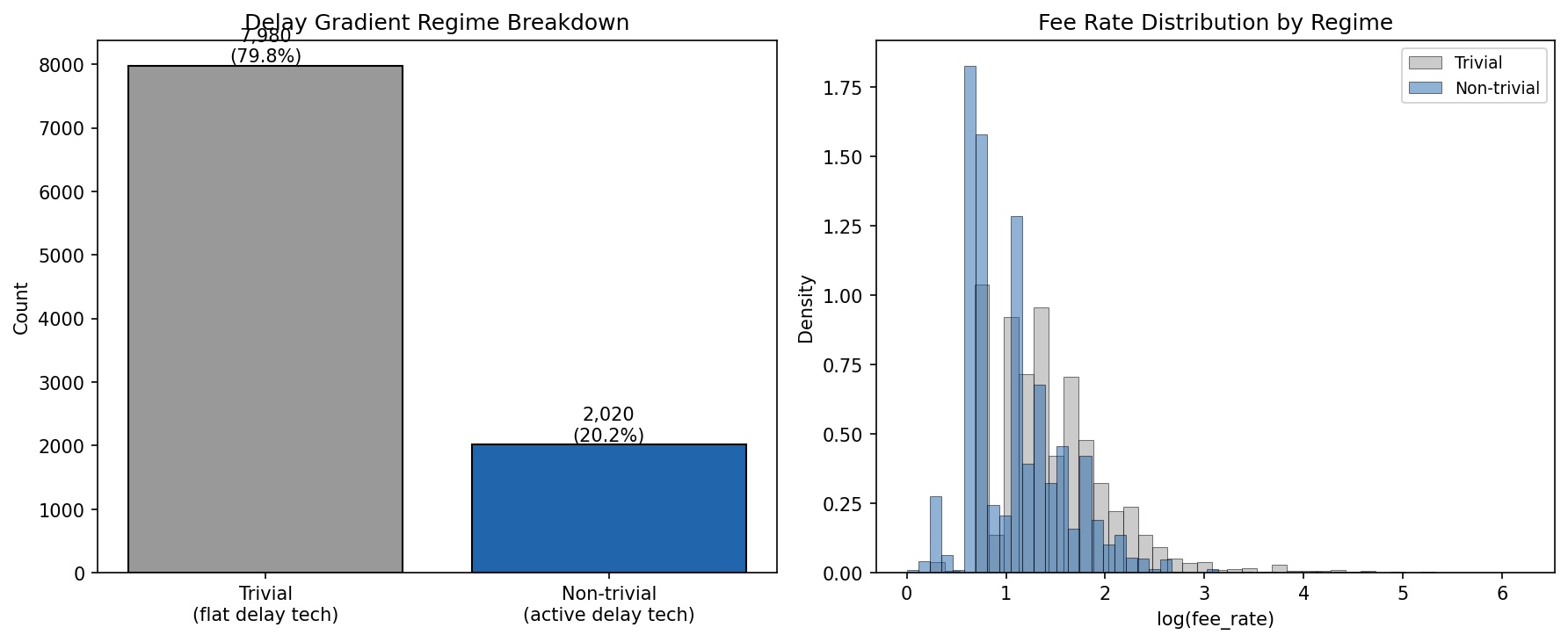}
    \caption{\emph{Left}: Breakdown of epochs by delay gradient regime.
             Trivial epochs (flat delay technology, $\hat{W}' \approx 0$)
             account for $79.8\%$ of the sample; the structural priority
             channel is active in only $20.2\%$ of epochs.
             \emph{Right}: Log fee-rate distributions by regime.  The two
             distributions largely overlap, with the non-trivial regime
             exhibiting a slightly heavier right tail consistent with
             congestion-driven fee differentiation.}
    \label{fig:trivial_nontrivial}
  \end{figure}
 
  \subsection{Stage 2: Fee Equation Estimation}
  \label{sec:results:stage2}

  We estimate

  \begin{equation}
    \log b_{it} \;=\;
    \alpha_0 + \alpha_1 \log \hat{W}'_{it}
    + X_{it}'\beta + S_t'\theta + \xi_t + \varepsilon_{it},
    \label{eq:intensive}
  \end{equation}

  \noindent where $\hat{W}'_{it}$ is the Stage~1 delay gradient,
  $X_{it}$ is a vector of transaction-level controls (RBF flag, CPFP status,
  log output amount, input/output counts, and type indicators), $S_t$ collects
  block-state variables (blockspace utilization, log time since last block, log
  mempool size), and $\xi_t$ is a full set of epoch fixed effects.
  Standard errors are clustered at the epoch level using the Liang--Zeger
  sandwich estimator.
  The impatience term $s(\iota_{it})$ that appears in the theoretical
  specification~\eqref{eq:intensive_log} is intentionally excluded here: the
  impatience proxy (derived from re-spend intervals) has incomplete coverage and
  its inclusion does not materially affect the coefficient on
  $\log \hat{W}'_{it}$; we treat the impatience channel as a robustness check
  rather than a baseline result.

  \paragraph{Results.}
  Table~\ref{tab:intensive} reports coefficient estimates for the base features.
  All variables are statistically significant at the $0.1\%$ level.

  \begin{table}[h]
    \centering
    \caption{Stage 2 estimates --- equation~\eqref{eq:intensive}}
    \label{tab:intensive}
    \begin{threeparttable}
      \begin{tabular}{l
                      S[table-format=-1.4]
                      S[table-format=1.4]
                      S[table-format=-3.2]
                      @{\hspace{0.8em}}l}
        \toprule
        Variable & {Coef.} & {Cl.\ SE} & {$t$-stat} & {Sig.} \\
        \midrule
        \multicolumn{5}{l}{\emph{Structural variable}} \\
        $\log \hat{W}'$ (delay gradient= $-D$ )     & -0.0459 & 0.0008 &  -58.11 & *** \\[4pt]
        \multicolumn{5}{l}{\emph{Transaction controls}} \\
        RBF flag                             &  0.6568 & 0.0147 &   44.80 & *** \\
        CPFP package                         & -0.1622 & 0.0046 &  -35.00 & *** \\
        $\log(\text{total output, sat})$     &  0.0422 & 0.0007 &   63.31 & *** \\
        $\log(\text{inputs})$                & -0.0984 & 0.0018 &  -55.22 & *** \\
        $\log(\text{outputs})$               & -0.0734 & 0.0016 &  -45.64 & *** \\
        Has \texttt{OP\_RETURN}              & -0.2287 & 0.0177 &  -12.91 & *** \\
        Has inscription                      & -0.1315 & 0.0204 &   -6.45 & *** \\[4pt]
        \multicolumn{5}{l}{\emph{Block-state controls}} \\
        Blockspace utilization               &  0.2497 & 0.0073 &   34.25 & *** \\
        $\log(\text{time since last block})$ &  0.1499 & 0.0037 &   40.05 & *** \\
        $\log(\text{mempool size})$          &  0.0699 & 0.0094 &    7.46 & *** \\
        \midrule
        Intercept                            & -1.9932 & 0.1879 &  -10.61 & *** \\
        \midrule
        \multicolumn{5}{l}{Epoch fixed effects: \num{1987} dummies (not shown)} \\
        \bottomrule
      \end{tabular}
      \begin{tablenotes}[flushleft]
        \footnotesize
        \item \textit{Notes}: $N$ = 11,110,277 observations; 1,988 epoch clusters (30-min windows).
          Standard errors clustered at the epoch level (Liang--Zeger sandwich); mean SE inflation relative to na\"ive OLS: $6.90\times$. Degrees of freedom for $t$-statistics: $1987$.
          In-sample $R^2 = 0.435$.
          Smearing factor (Duan, 1983) for retransformation: $1.344$.
          *** $p < 0.001$.
      \end{tablenotes}
    \end{threeparttable}
  \end{table}

  \paragraph{Structural interpretation.}
  The negative and precisely estimated coefficient on $\log \hat{W}'_{it}$
  ($\hat{\alpha}_1 = -0.046$, $t = -58.1$) is the central result of the paper.\footnote{A negative coefficient value on $\log \hat{W}'_{it}$ is equivalent to a positive coefficient on the elasticity of the slope of the value of reduction in delay, $D^{'}(p_{it}, s_{t})$. }
  The VCG theory predicts that a transaction's fee rate should be
  \emph{decreasing} in the marginal delay reduction it generates: when each
  additional unit of priority produces a large reduction in expected wait time
  (steep delay gradient), senders can achieve a given confirmation speed at a
  \emph{lower} fee.  Conversely, when the delay schedule is flat---priority does
  little to accelerate confirmation---senders must bid higher to separate
  themselves from the queue.
  The estimate is consistent with this prediction and survives the inclusion of
  almost two thousand epoch fixed effects, indicating it is not driven by
  aggregate time trends.

  \paragraph{Transaction-type controls.}
  RBF-flagged transactions pay a premium of approximately $93\%$
  ($\exp(0.657) \approx 1.93$), suggesting that opting into replace-by-fee
  signals high urgency and is compensated accordingly by miners.
  CPFP packages pay $15\%$ less than stand-alone transactions of the same
  apparent priority, consistent with the bundled nature of the fee subsidy.
  Transactions carrying \texttt{OP\_RETURN} outputs or inscriptions pay $20\%$
  and $12\%$ lower fee rates respectively, reflecting their lower time
  sensitivity.

  \paragraph{Block-state controls.}
  Blockspace utilization (fraction of the current block's weight already filled
  at the time of observation) and log time elapsed since the previous block both
  have positive, economically meaningful coefficients.
  A block that is $10$~percentage points closer to full is associated with a
  $2.5\%$ higher fee rate; each doubling of the inter-block interval adds
  roughly $10\%$ to fee rates, consistent with impatience growing as the gap
  since the last confirmation widens.
  Mempool depth (log mempool size in bytes) also enters positively, capturing
  longer-run congestion not fully absorbed by the priority percentile.

  \paragraph{Epoch fixed effects and inference.}
  Epoch fixed effects absorb all variation in unobserved aggregate conditions
  within a 30-minute window---miner preferences, network propagation, and macro
  Bitcoin price dynamics---ensuring that the structural coefficients identify
  within-epoch, cross-transaction variation.
  The mean SE inflation of $6.90\times$ relative to na\"ive OLS confirms that
  transactions within the same epoch are substantially correlated, making epoch
  clustering essential for valid inference.

  \subsubsection{Model diagnostics}
  \label{sec:results:stage2:diag}

  Figures~\ref{fig:stage2_act} and~\ref{fig:stage2_diag} present standard diagnostics for the second-stage regression. The fitted-versus-actual plot shows good fit over the bulk of the distribution, with underprediction primarily in the extreme upper tail. The visible vertical streaking further suggests that realized fee rates may cluster at a discrete set of focal values rather than vary continuously. A plausible hypothesis is that wallet interfaces often implement a small menu of common fee options—e.g., 1, 1.5, 2.5 sat/vB—inducing bunching in realized fees even conditional on differing transaction characteristics and mempool states. If so, the diagonal bands in the residuals-versus-fitted plot (right panel) are simply the corresponding geometric implication, because residuals equal realized minus fitted fees. This same mechanism may contribute to the modest heteroskedasticity at low fitted values, where realized fees are additionally constrained by an effective lower bound. The residual histogram is nonetheless approximately normal, with slight positive skew, consistent with the smearing correction used in retransformation.
  
\begin{figure}[H]
    \centering
    \includegraphics[width=\linewidth/2]{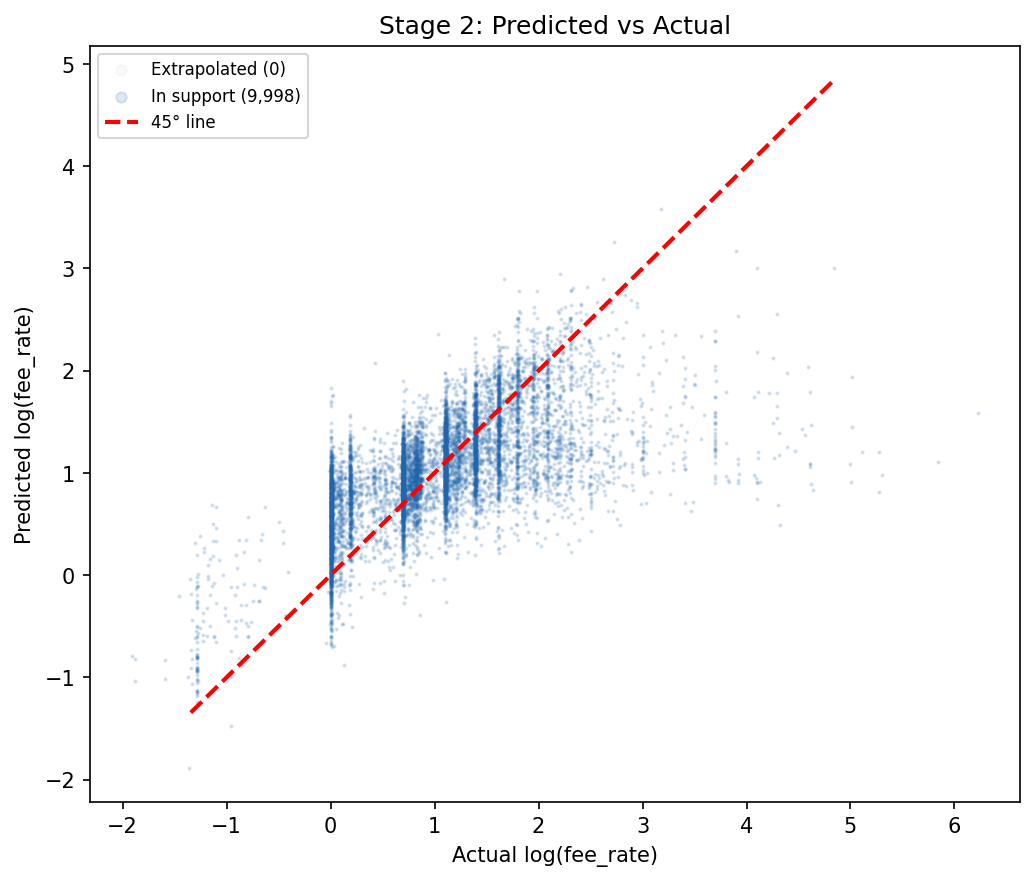}
    \caption{Actual vs. predicted log fee rate with a 10k subsample plotted.}
    \label{fig:stage2_act}
  \end{figure}
  
  \begin{figure}[H]
    \centering
    \includegraphics[width=\linewidth]{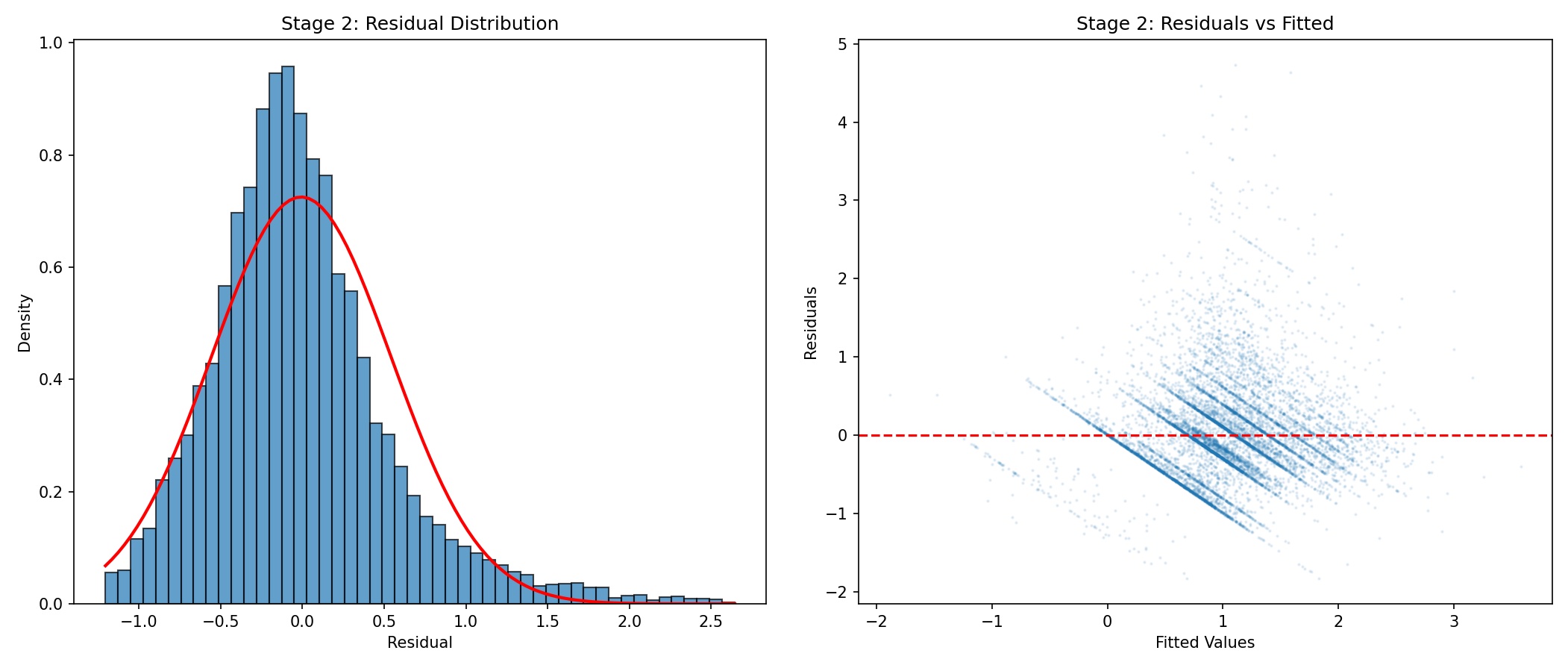}
    \caption{\emph{Left}: Residual density with a fitted normal overlay.
             \emph{Right}: Residuals vs.\ fitted values.}
    \label{fig:stage2_diag}
  \end{figure}

  Figure~\ref{fig:stage2_coef} displays the base-feature coefficient estimates
  with $95\%$ epoch-clustered confidence intervals.

  \begin{figure}[H]
    \centering
    \includegraphics[width=0.75\linewidth]{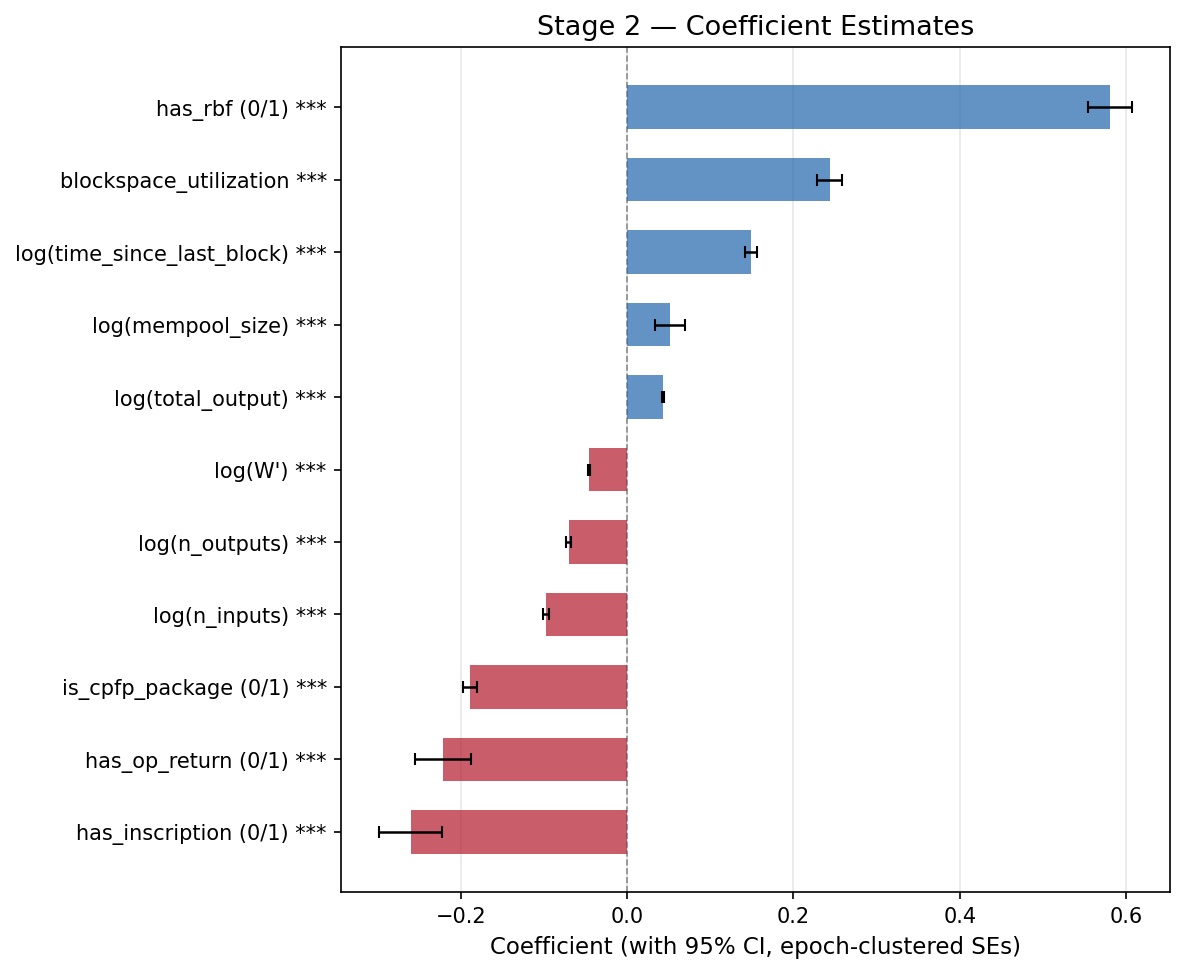}
    \caption{Second-stage coefficient estimates with $95\%$ confidence
             intervals (epoch-clustered standard errors).  Blue bars indicate
             positive effects on log fee rate; red bars indicate negative
             effects.  All base features are significant at $p < 0.001$.}
    \label{fig:stage2_coef}
  \end{figure}

\subsection{Robustness}
\label{subsec:robustness}

Appendix \ref{app:spline} reports two complementary robustness checks. First, allowing the impatience proxy to enter flexibly through a monotone I-spline leaves the estimated coefficient on the delay-gradient term essentially unchanged (-0.0455 versus -0.0459 in the baseline), improves fit only modestly, and implies an economically small aggregate impatience effect: the implied change in log fee rates from the median to the 95th percentile of impatience is 0.0270 (SE = 0.0032). Second, replacing the Random Forest first stage with a monotone gradient-boosted tree learner on the full sample yields $\hat{\alpha}_1$ = -0.0861 (SE = 0.0005). This is more negative in magnitude than the Random-Forest estimate, but it preserves the sign, precision, and qualitative interpretation of the structural channel. The corresponding aggregate impatience effect remains positive but smaller at 0.0178 (SE = 0.0023). Taken together, these results indicate that the main finding is not an artefact of the baseline functional-form choice: cross-sectional fee variation remains primarily tied to the steepness of the delay schedule, while impatience plays a secondary role, although the exact magnitude of the delay-gradient coefficient is somewhat sensitive to the first-stage learner.

  \subsection{Temporal Stability}
  \label{sec:results:temporal}
 
  A key question for any structural model estimated on time-series data is
  whether the coefficients represent stable structural parameters or are
  artifacts of a particular sample period.  We conduct six complementary
  diagnostics, each requiring re-estimation on different subsets of the data.

  \paragraph{Intraclass correlation.}
  The intraclass correlation coefficient~(ICC) for $\log \hat{W}'$ is $0.186$,
  indicating that roughly one-fifth of its total variance lies between epochs
  rather than within them.  The implied design effect is~$980$, reducing the
  effective sample size from \num{11108662} to approximately \num{11340}.
  This confirms that epoch-clustered standard errors are essential; na\"ive OLS standard errors would dramatically
  overstate precision.
 \begin{figure}[H]
    \centering
    \includegraphics[width=\linewidth]{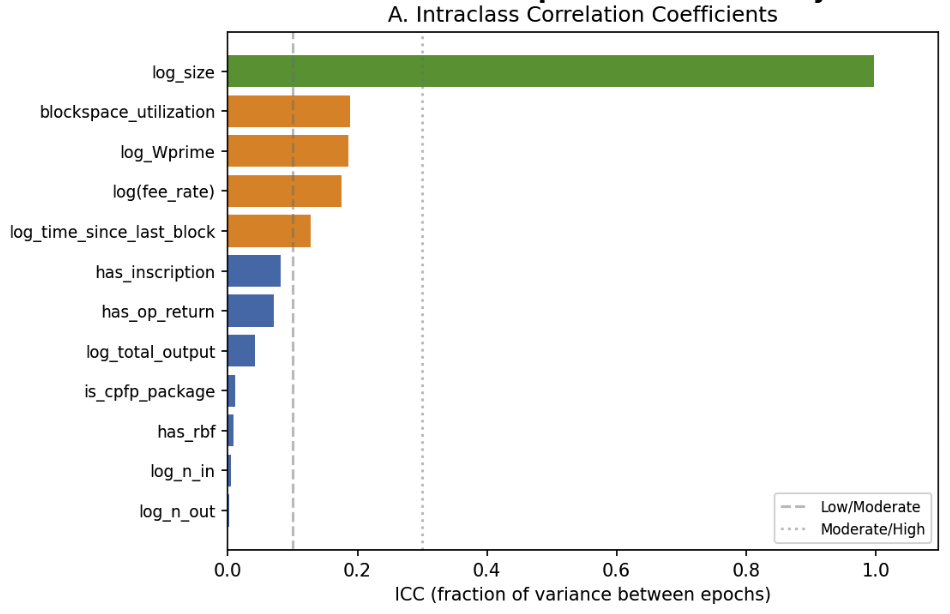}
    \caption{Intraclass correlation coefficients for all model
             features.}
      
    \label{fig:ICC}
  \end{figure}
  
  \paragraph{Cumulative precision.}
  Figure~\ref{fig:CPC} traces the clustered standard error of
  $\hat{\alpha}_1$ as epochs are added sequentially.  Precision improves
  monotonically, with the SE falling from $0.0016$ at 298~epochs to $0.0008$ at
  the full \num{1988}~epochs.  The coefficient estimate itself is stable
  throughout this accumulation, ranging from $-0.042$ to $-0.048$, and is
  significant at the $0.1\%$ level at every sample size tested.

\begin{figure}[H]
    \centering
    \includegraphics[width=\linewidth]{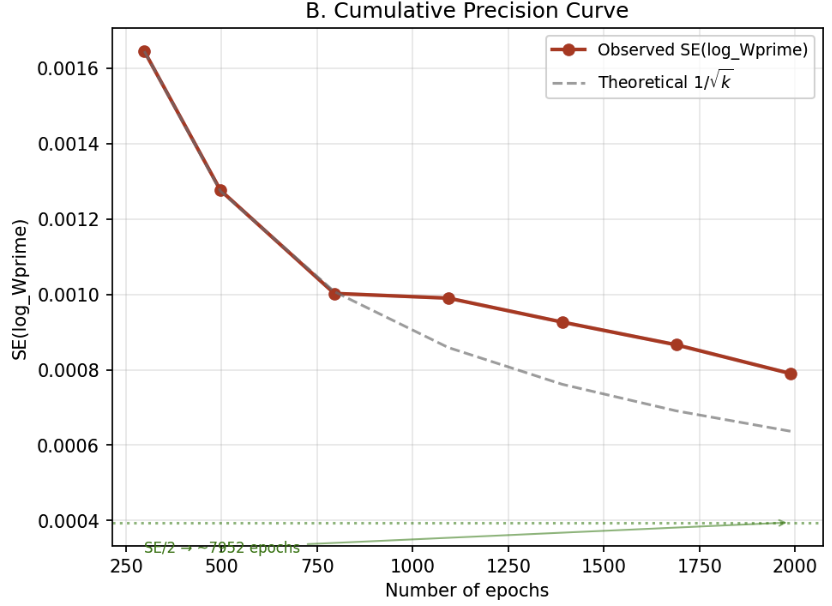}
    \caption{Cumulative precision curve for $\hat{\alpha}_1$: the
             observed SE (red) tracks the theoretical $1/\sqrt{k}$ rate (green
             dashed).}
      
    \label{fig:CPC}
  \end{figure}

  \paragraph{Rolling-window stability.}
  We re-estimate the full model on five non-overlapping temporal windows of
  approximately 400~epochs each.  The structural coefficient $\hat{\alpha}_1$
  ranges from $-0.054$ to $-0.039$ across windows
  (Figure~\ref{fig:rolling}, a spread of $8.6$ times its pooled
  standard error.  While all five estimates are negative and individually
  significant, this variation indicates non-trivial parameter drift---the delay
  gradient's effect on fees is not perfectly time-invariant.  Among the control
  variables, only $\log(\text{time since last block})$ exhibits
  window-to-window stability (range$/\text{SE} = 2.0$).

  \begin{figure}[H]
    \centering
    \includegraphics[width=\linewidth]{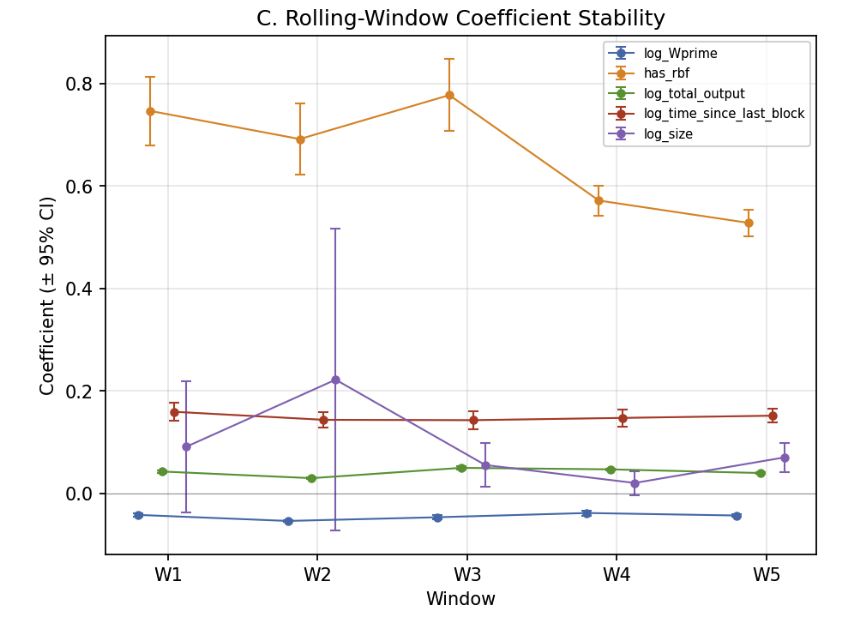}
    \caption{Rolling-window FWL coefficient estimates
             ($\pm 95\%$~CI) across five non-overlapping temporal windows.}
      
    \label{fig:rolling}
  \end{figure}

  \paragraph{Out-of-sample generalization.}
  To assess whether the structural channel generalises forward in time, we
  estimate expanding-window models via FWL that train on the first $k$ epochs
  and test on all subsequent epochs (Figure~\ref{fig:OOS}.
  We report three complementary $R^2$ measures:
  \begin{enumerate}
    \item \emph{Within-epoch, full model} (green): the training-period
      $\hat{\beta}_{\text{FWL}}$ is applied to test-epoch demeaned data,
      testing whether the structural coefficients transfer across time.
      This yields $R^2 \approx 0.24$--$0.30$ depending on the split.
    \item \emph{Within-epoch, restricted} (orange): the same procedure omitting
      $\log \hat{W}'$.  The $7$--$9$ percentage-point drop relative to the full
      model confirms that the delay gradient consistently contributes to
      within-epoch explanatory power.
    \item \emph{Strict forecasting} (red): raw $\log(\text{fee rate})$ is
      predicted using only the structural features and a training-period
      intercept, with no access to test-epoch means.  The model achieves
      $R^2 \approx 0.15$--$0.23$, confirming that the structural features alone
      carry real predictive power even without epoch fixed effects.  The dip to
      $R^2 = 0.15$ for the $80/20$ split suggests that the final portion of the
      sample may represent a distinct fee-market regime not well captured by
      earlier training data.
  \end{enumerate}

  \begin{figure}[H]
    \centering
    \includegraphics[width=\linewidth]{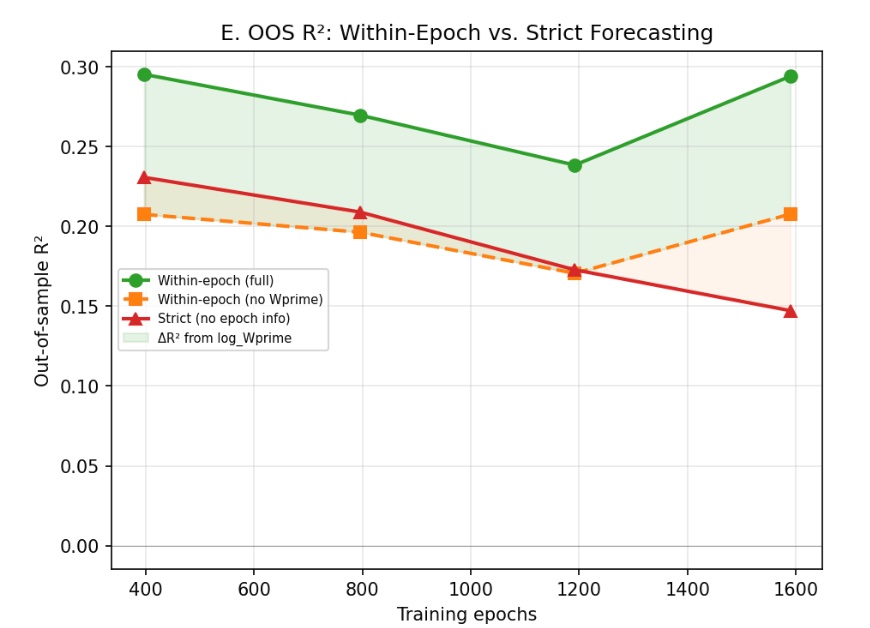}
    \caption{Out-of-sample $R^2$ learning curves under three
             evaluation protocols; the shaded wedge shows the
             $7$--$9$~pp~$\Delta R^2$ contributed by the delay gradient.}
      
    \label{fig:OOS}
  \end{figure}

  \paragraph{Variance decomposition.}
  Figure~\ref{fig:OOS2} decomposes the variance of test-period
  outcomes and strict-OOS residuals into between- and within-epoch components.
  The test outcome has $\text{ICC} = 0.195$: approximately $19\%$ of variance
  is between epochs and $81\%$ within.
  After applying the structural model without epoch information, the residual
  $\text{ICC}$ \emph{rises} to $0.292$---the model absorbs $29\%$ of
  within-epoch variance but slightly \emph{increases} between-epoch variance
  (by $21\%$).
  This asymmetry is the key finding: the structural model succeeds at explaining
  \emph{why transactions within the same time period pay different fees}---through
  the delay gradient, RBF signalling, transaction size, and other
  cross-sectional features---but it cannot predict \emph{which epochs will have
  high or low fees overall}.
  Epoch-level fee variation is driven by unobserved aggregate conditions (miner
  behaviour, macro sentiment, sudden demand shocks) that the structural features
  do not capture, and that the epoch fixed effects are designed to absorb.

 \begin{figure}[H]
    \centering
    \includegraphics[width=\linewidth]{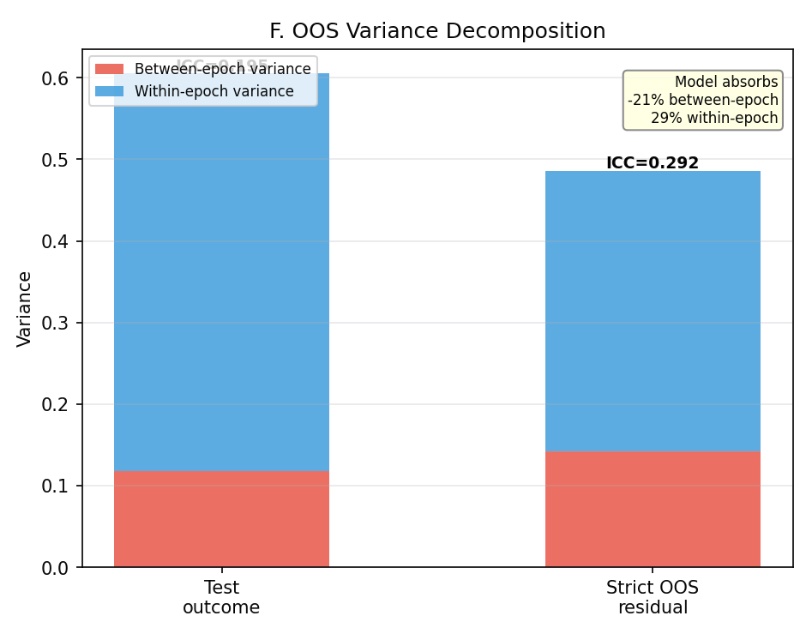}
    \caption{Variance decomposition of the $80/20$ temporal split:
             the model absorbs within-epoch variance but increases
             between-epoch variance, confirming cross-sectional rather than
             temporal explanatory power.}
      
    \label{fig:OOS2}
  \end{figure}

  \paragraph{Epoch fixed-effect persistence.}
  The autocorrelation of the estimated epoch fixed effects
  (Figure~\ref{fig:Auto} decays from $0.86$ at lag~1
  (30~minutes) to $0.10$ at lag~24 (12~hours), with a modest 24-hour seasonal
  uptick ($0.23$ at lag~48).  This pattern is consistent with slowly
  mean-reverting congestion shocks rather than permanent shifts, supporting the
  use of epoch-level fixed effects to absorb aggregate conditions.
  The strong short-run persistence also suggests that, for data collection
  purposes, spaced-out sampling (e.g.\ covering more distinct calendar days) is
  more informative than collecting consecutive hours, since adjacent epochs
  carry largely redundant aggregate information.

 \begin{figure}[H]
    \centering
    \includegraphics[width=\linewidth]{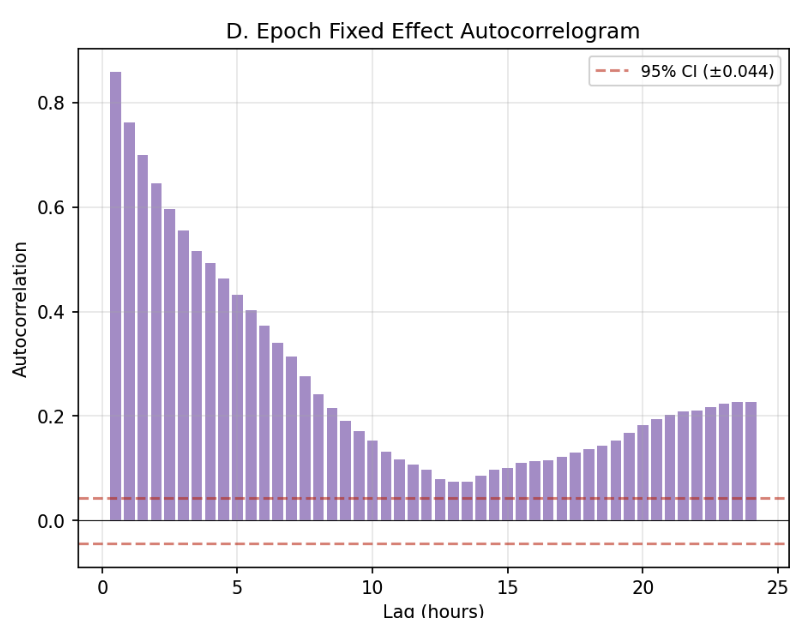}
    \caption{Autocorrelogram of epoch fixed effects.}
      
    \label{fig:Auto}
  \end{figure}

  \paragraph{Implications.}
  The structural coefficients drift across congestion regimes---$\hat{\alpha}_1$
  ranges from $-0.054$ to $-0.039$ across five temporal windows---yet the model
  generalises well out-of-sample: the VCG channel adds $7$--$9$ percentage
  points of $R^2$ within epochs even when trained on disjoint time periods.
  The variance decomposition confirms that the model's strength is
  cross-sectional: it explains within-epoch heterogeneity in fees but does not
  predict the epoch-level mean.
  Extending the sample period is therefore valuable not because the structural
  parameters converge to a fixed point, but because additional epochs provide
  independent clusters that improve precision and span a wider range of
  congestion regimes, enabling future work to model the regime-switching
  itself---for example, by interacting the structural coefficients with
  epoch-level congestion state.


\section{Conclusion}
\label{sec:conclusion}

This paper studies how Bitcoin transaction fees are formed. To address that question, we adapt a priority-pricing model of the mempool to Bitcoin's weight-based block capacity and estimate it on transaction-level data. Miners are motivated to prioritize the processing of transactions by unit fees and transactors are motivated to set their fee as a function of their relative impatience.

We introduce a novel mempool data set that enables us to observe transaction arrivals, exits, block inclusion, fee-bumping events, and high-frequency congestion states, which enables us to recover the queueing environment faced by each transaction when its fee is chosen. That information is essential for separating the technology of confirmation delay from the willingness to pay for priority, and it is largely absent from blockchain-only data.

The estimates indicate that mempool congestion governs the value of priority. Confirmation delays are shaped mainly by mempool pressure and available blockspace, and fees respond to that delay technology in the direction implied by the model. Several institutional features matter as well. RBF-tagged transactions pay substantial premia, CPFP packages are priced differently from stand-alone transactions, and contemporaneous block conditions shift fees in economically important ways. At the same time, the model is stronger in the cross section than over time. It explains why transactions in the same fee environment pay different fees better than it explains why the overall fee level moves across epochs.
That distinction is important for interpreting what the paper does and does not deliver. The estimates provide a disciplined account of private demand for transaction priority in the environment we observe, and they give the right empirical objects for counterfactual analysis. However, the short timespan of our mempool data -- approximately 90 days – limits the extent of variation in the data that is required for a confident estimate of model parameters. We intend to continue data collected from our Bitcoin ndoe to improve the preciaion of our estimates in the future.



\appendix

\section{Proof of Model Equilibrium and Discussion}
\label{app:vcg_indirect}

This appendix explains the two equilibrium properties stated in Section~3.3:
(i) fee choices implement an efficient priority allocation and
(ii) equilibrium payments coincide with the VCG payments for selling priority of service.
The key step, relative to a direct VCG mechanism, is to show how the \emph{indirect}
Bitcoin institution---\emph{fee-rate bidding} plus \emph{fee-rate-based miner selection}---implements
the VCG outcome in a large, atomistic priority-queue environment.

\subsection{Environment: fee-rate bidding as the indirect mechanism}
\label{app:env_indirect}

Fix an epoch $t$ and suppress the time subscript. As in Section~3, each transactor $i$
chooses a fee rate $r_i \ge 0$ and pays total fee $b_i = r_i \cdot \mathrm{Weight}_i$.
Expected surplus is (eq.\ (1) in the main text)
\begin{equation}
u_i = R_i - c_i W_i - b_i
    = R_i - c_i W(p_i, s) - r_i \,\mathrm{Weight}_i,
\label{eq:u_main}
\end{equation}
where $c_i$ is per-block delay cost, $s$ is the contemporaneous mempool state, and
$p_i\in(0,1)$ is the fee-rate percentile induced by the miner priority rule (eqs.\ (2)--(3)):
\begin{equation}
p_i = \frac{\mathrm{rank}(r_i)}{N}, 
\qquad
W_i = W(p_i,s),
\label{eq:pW_main}
\end{equation}
with higher $p$ meaning higher priority (shorter expected delay).

\paragraph{Miner behavior.}
Under competitive mining and absent strategic withholding, miners myopically maximize fee
revenue per unit of blockspace and therefore select transactions in descending order of fee rate
until the weight constraint binds (this is the Bitcoin adaptation of the priority-queue framework
of \citet{Huberman2021}).

\paragraph{Atomistic transactors (price-taking in the priority market).}
A maintained assumption for both the model and the empirical strategy is that each transaction is
small relative to the mempool and takes as given the mapping from fee rates to expected delay in the
current state $s$. Formally, a single transaction treats the function $r \mapsto p(r)$ and
$p \mapsto W(p,s)$ as fixed at the time of fee choice. This is the sense in which fee-rate bidding
can be analyzed as choosing a point on an equilibrium ``priority price schedule'' rather than as an
$\epsilon$-undercutting game among a finite set of players.

\subsection{VCG is an appropriate formalization of the Bitcoin fee market}
\label{app:why_vcg}

Section~3 is a congested-service model: higher priority means faster confirmation, and priority is
scarce because block capacity is limited. The only strategic interaction is a \emph{congestion externality}:
when transaction $i$ increases its priority, it displaces lower-priority transactions and increases their
expected confirmation delays.

The Vickrey--Clarke--Groves (VCG) mechanism \citep{vickrey1961counterspeculation,clarke1971multipart,groves1973incentives}
is precisely the canonical way to (i) implement an efficient allocation of a scarce resource under private
information and (ii) charge each agent the external cost it imposes on others. In the present setting,
that external cost is the \emph{expected delay cost} imposed on lower-priority transactions.

A central result in \citet{Huberman2021} is that, in the priority-queue fee market, the decentralized
equilibrium payment schedule coincides with these VCG externality payments. Section~3.3 adopts this interpretation:
fees are prices for priority, equal to expected delay externalities on lower-priority transactors.








\subsection{Equilibrium of the indirect mechanism: fee-rate bidding}
\label{app:equilibrium_indirect}

We now show how the \emph{indirect} mechanism (users choose fee rates; miners sort by fee rates)
induces a payment schedule that coincides with VCG.

\paragraph{Step 1: reduce fee-rate choice to choosing a priority percentile.}
Let $G(\cdot)$ denote the equilibrium cross-sectional distribution of fee rates in the relevant set used
to define percentile rank in (2). For a price-taking transaction, choosing a fee rate $r$ is equivalent to
choosing a percentile $p = G(r)$; if $G$ is strictly increasing on its support, it has an inverse
$r(p) \equiv G^{-1}(p)$.

Thus, conditional on the prevailing schedule $r(\cdot)$ and state $s$, a transaction with characteristics
$(R_i,c_i,\mathrm{Weight}_i)$ solves
\begin{equation}
\max_{p \in [0,1]}
\Big\{ R_i - c_i W(p,s) - \mathrm{Weight}_i \, r(p) \Big\}.
\label{eq:user_problem_p}
\end{equation}

\paragraph{Notation for slopes.}
To avoid sign confusion, define the \emph{positive} local slope of the delay schedule
\begin{equation}
D(p,s) \equiv -\frac{\partial W(p,s)}{\partial p} > 0.
\label{eq:D_def}
\end{equation}
(We can interpret the $W_p(\cdot)$ term in eq.\ (4) of the main text as this positive slope.)

\paragraph{Step 2: monotone sorting in equilibrium.}
Because higher $c_i$ increases the marginal benefit of higher priority, the model has a single-crossing
property in $(c,p)$.

\begin{lemma}[Single-crossing $\Rightarrow$ monotone equilibrium priorities]
\label{lem:single_crossing}
Suppose $r(p)$ is weakly increasing and $W(p,s)$ strictly decreasing. Then any equilibrium of
\eqref{eq:user_problem_p} in which types separate is monotone: higher $c_i$ choose weakly higher $p_i$.
\end{lemma}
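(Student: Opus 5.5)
The plan is the standard revealed-preference (monotone comparative statics) argument applied to the objective in \eqref{eq:user_problem_p}. Write the objective as $U(p;c,\mathrm{Weight}) = R - c\,W(p,s) - \mathrm{Weight}\, r(p)$. The part depending on $c$ is $-c\,W(p,s)$, so for $p'>p$ the incremental gain from moving up is
\[
U(p';c)-U(p;c) = c\,[W(p,s)-W(p',s)] - \mathrm{Weight}\,[r(p')-r(p)].
\]
Because $W$ is strictly decreasing, $W(p,s)-W(p',s)>0$, so this difference is strictly increasing in $c$. This is strict increasing differences, i.e.\ the single-crossing property in $(c,p)$. In terms of the notation \eqref{eq:D_def} it is equivalently the statement that $\int_p^{p'} D(q,s)\,dq>0$ multiplies $c$. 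No differentiability is needed, which matters because the empirical schedule is only a monotone interpolant.

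Next, take two transactors with $c_i>c_j$ choosing optimal (separating) priorities $p_i$ and $p_j$. We hold the weight fixed, or else compare per unit weight by dividing the objective by $\mathrm{Weight}$ and working with the cost type $c/\mathrm{Weight}$. Suppose toward a contradiction that $p_i<p_j$. Optimality gives $U(p_i;c_i)\ge U(p_j;c_i)$ and $U(p_j;c_j)\ge U(p_i;c_j)$. Adding these inequalities, the $R$ and fee terms cancel, leaving
\[
(c_i-c_j)\,[W(p_i,s)-W(p_j,s)]\le 0.
\]
But $c_i>c_j$ and $p_i<p_j$ give $W(p_i,s)>W(p_j,s)$, so the left side is strictly positive, a contradiction. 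Hence $p_i\ge p_j$. The assumption that $r$ is weakly increasing is not even needed for this cancellation. Its role is to ensure that $p=G(r)$ is a well-defined equilibrium schedule, so that the reduction in Step 1 is legitimate; I would remark on this explicitly.

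The main obstacle is heterogeneity in $\mathrm{Weight}_i$. The fee term $\mathrm{Weight}_i\, r(p)$ does not cancel when two transactors have different weights, so monotonicity in $c$ alone can fail. I would handle this by stating the result for the normalized type $\theta_i=c_i/\mathrm{Weight}_i$: dividing $U$ by $\mathrm{Weight}_i>0$ preserves the argmax, and the argument above goes through verbatim with $\theta$ in place of $c$. In the equal-weight benchmark of \citet{Huberman2021}, this reduces to the stated conclusion in $c_i$. A secondary point is the existence of maximizers on $[0,1]$. The lemma only concerns equilibria that exist, so I would simply condition on optimal choices and avoid continuity assumptions.
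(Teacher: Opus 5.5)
Your argument is correct. It rests on the same idea as the paper's proof, increasing differences of $-c\,W(p,s)$ in $(c,p)$, but you prove the comparative-statics step directly instead of citing it. The paper sets $V=-W$, computes the cross-partial $\partial^2[cV(p)]/\partial c\,\partial p=V'(p)>0$, and invokes ``standard monotone comparative statics,'' with $\mathrm{Weight}$ held fixed as a parameter of type $c$'s problem.

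Your two-inequality revealed-preference argument gains three things.
\begin{enumerate}
\item It uses only the stated hypothesis that $W$ is strictly decreasing. The paper's cross-partial tacitly requires differentiability, and strict monotonicity alone does not give $V'>0$ everywhere.
\item Because the differences are strictly increasing, it shows directly that \emph{every} optimal selection is monotone, which is what ``any equilibrium'' requires. The paper's appeal to a standard result does not say whether only the strong-set-order version is meant; under weak increasing differences, that version guarantees monotonicity only of the extremal selections.
\item It makes explicit that with heterogeneous weights the relevant type is $c_i/\mathrm{Weight}_i$. Read literally across transactors of different weights, the lemma's conclusion in $c_i$ can fail, and the paper's proof only covers comparisons at a common weight. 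Your normalized version is the one consistent with the paper's later use of $c(p)/\overline{\mathrm{Weight}}(p)$ in \eqref{eq:rprime_general}.
\end{enumerate}

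You also note that neither the monotonicity of $r$ nor the separation hypothesis is used; that is true of the paper's proof as well.
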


\begin{proof}
Let $V(p)\equiv -W(p,s)$ so $V$ is strictly increasing in $p$.
Type $c$ chooses $p$ to maximize $R + cV(p) - \mathrm{Weight}\,r(p)$.
The cross-partial is $\partial^2/\partial c\,\partial p \,[cV(p)] = V'(p) >0$,
so preferences satisfy increasing differences in $(c,p)$. By standard monotone comparative statics,
optimal $p$ is nondecreasing in $c$.
\end{proof}

Lemma~\ref{lem:single_crossing} delivers the first part of the equilibrium claim in Section~3.3:
higher delay-cost transactors select higher priority (and thus shorter delay).

\paragraph{Step 3: the equilibrium marginal condition pins down the payment rule.}
Assume an interior solution for $p$ and differentiability of $r(\cdot)$.

The first-order condition for \eqref{eq:user_problem_p} is
\begin{equation}
c_i D(p_i,s) = \mathrm{Weight}_i \, r'(p_i).
\label{eq:foc}
\end{equation}

In a separating (monotone) equilibrium, there is a well-defined delay-cost type $c(p)$ at each percentile $p$
(up to measure-zero issues). Taking \eqref{eq:foc} at the type that chooses percentile $p$ yields the
\emph{equilibrium marginal fee-rate schedule}:
\begin{equation}
r'(p) = \frac{c(p)}{\overline{\mathrm{Weight}}(p)}\, D(p,s),
\label{eq:rprime_general}
\end{equation}
where $\overline{\mathrm{Weight}}(p)$ denotes the weight of the marginal transaction at percentile $p$.
In the equal-weight benchmark of \citet{Huberman2021}, or under the approximation that marginal weights
do not vary sharply over $p$ within an epoch, $\overline{\mathrm{Weight}}(p)$ can be treated as constant.
In that case the equilibrium fee schedule is pinned down (up to an additive constant) by integrating
\eqref{eq:rprime_general}.

For the Bitcoin adaptation, it is convenient to bundle the mapping from percentile units to ``how much blockspace is
displaced when priority rises'' into a normalization constant $\kappa(s)$ (this is the $\kappa_t$ in eq.\ (4) of
the main text). Writing this compactly, the implied \emph{marginal} total fee for a transaction of weight
$\mathrm{Weight}_i$ at percentile $p$ is
\begin{equation}
\frac{d b(p)}{dp}
= \mathrm{Weight}_i\, r'(p)
= \kappa(s)\, c(p)\, D(p,s)\,\mathrm{Weight}_i.
\label{eq:dbdp}
\end{equation}

\subsection{VCG payments for priority and equivalence to the bidding equilibrium}
\label{app:vcg_equiv}

We now compute the VCG payment for ``selling priority of service'' and show it matches the fee schedule implied by
fee-rate bidding.

\paragraph{VCG payment as delay externality (finite-$n$ intuition).}
Consider (for intuition) a finite set of $n$ equal-size transactions ordered by priority (highest first),
with expected delays $W_1 \le W_2 \le \dots \le W_n$ and delay costs $c_1,\dots,c_n$.
Removing transaction $m$ shifts every lower-priority transaction $j>m$ up by one rank, reducing its expected delay
from $W_j$ to $W_{j-1}$. The externality imposed by transaction $m$ on lower-priority transactions is therefore
\begin{equation}
b_m^{VCG}
= \sum_{j=m+1}^{n} c_j (W_j - W_{j-1}).
\label{eq:vcg_discrete}
\end{equation}
This is the VCG payment: the increase in others' total delay cost caused by $m$'s presence.

\paragraph{Continuous approximation.}
In a large market, index transactions by percentile $p\in[0,1]$ and write $c(p)$ for the delay cost of the
transaction at percentile $p$. A small increase $dp$ in priority shifts down an infinitesimal mass of lower-priority
transactions and increases their expected delay by approximately $D(p,s)\,dp$ blocks. Valuing that marginal delay at
$c(p)$ yields the \emph{marginal} externality price:
\[
d b^{VCG}(p)
= \kappa(s)\, c(p)\, D(p,s)\,\mathrm{Weight}_i \, dp.
\]
Integrating from $0$ (lowest priority) to $p$ gives the VCG payment schedule:
\begin{equation}
b^{VCG}(p)
= \kappa(s)\,\mathrm{Weight}_i \int_{0}^{p} c(q)\, D(q,s)\, dq.
\label{eq:vcg_cont}
\end{equation}

\begin{proposition}[Fee-rate bidding equilibrium payments coincide with VCG]
\label{prop:equiv_vcg}
Assuming transaction delay is monotone in fee rank and the atomistic (price-taking) assumption, any differentiable monotone
equilibrium of fee-rate bidding satisfies
\[
\frac{db(p)}{dp} = \kappa(s)\,c(p)\,D(p,s)\,\mathrm{Weight}_i,
\]
and hence its implied payment schedule coincides with the VCG payment schedule \eqref{eq:vcg_cont}.
\end{proposition}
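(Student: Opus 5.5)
The plan is to derive the marginal fee condition directly from the price-taking user problem \eqref{eq:user_problem_p}, and then integrate it and compare with \eqref{eq:vcg_cont}.

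\textbf{Step 1: first-order condition.} Fix the prevailing schedule $r(\cdot)$ and state $s$. Under the atomistic assumption, each transactor solves \eqref{eq:user_problem_p} with $W(\cdot,s)$ and $r(\cdot)$ taken as given. I will first use Lemma~\ref{lem:single_crossing}: with $W$ strictly decreasing and $r$ weakly increasing, the equilibrium assignment $c\mapsto p$ is nondecreasing. In a differentiable separating equilibrium this assignment can therefore be inverted to a type $c(p)$ at almost every $p$. At an interior optimum, the first-order condition \eqref{eq:foc}, $c(p)D(p,s)=\mathrm{Weight}_i\,r'(p)$, holds at each such $p$. Multiplying through, and absorbing the conversion from percentile units to displaced blockspace into $\kappa(s)$ exactly as in \eqref{eq:dbdp}, gives $db(p)/dp=\kappa(s)c(p)D(p,s)\mathrm{Weight}_i$, which is the displayed identity.

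\textbf{Step 2: anchoring the level.} A derivative identity fixes the payment schedule only up to a constant, so I need a boundary condition. I will argue that the lowest-priority type, $p\to 0$, pays zero: it imposes no externality in the VCG schedule, and in the bidding equilibrium it could deviate to the relay minimum without losing rank. This is where the normalization to the minimum fee enters, and I would state it explicitly as $b(0)=0$ net of the relay floor. Integrating from $0$ to $p$ then gives $b(p)=\kappa(s)\mathrm{Weight}_i\int_0^p c(q)D(q,s)\,dq$, which is \eqref{eq:vcg_cont}.

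\textbf{Step 3: checking the VCG side.} To make the equivalence substantive rather than definitional, I will justify \eqref{eq:vcg_cont} as the continuum limit of \eqref{eq:vcg_discrete}. Writing $W_j-W_{j-1}\approx D(p_j,s)/n$ turns the sum into a Riemann sum that converges to an integral of $c\,D$ over the displaced lower-priority mass. The main obstacle is orientation. In \eqref{eq:vcg_discrete} the externality falls on transactions of lower priority, which in the percentile indexing lie below $p$. The $c(q)$ in the integrand must therefore be the delay cost of the displaced agents, not of agent $i$. The equilibrium first-order condition, however, uses the type located at $p$ itself.

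I would resolve this as follows. In the continuum, the marginal agent displaced when $i$ moves up by $dp$ at percentile $p$ is exactly the type $c(p)$ located there. Its delay rises by $D(p,s)\,dp$, so both marginal prices equal $c(p)D(p,s)$ pointwise. Hence the derivatives agree for every $p$, and with the common boundary condition from Step 2 the two schedules coincide. The remaining technical points are differentiability of $r(\cdot)$, and measure-zero pooling at ties, which the tie-aware percentile handles. I would treat these as part of the maintained hypothesis "differentiable monotone equilibrium".
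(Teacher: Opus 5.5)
Your proposal is correct and follows essentially the same route as the paper. The paper likewise takes the marginal condition \eqref{eq:dbdp} from the first-order condition, integrates it from $0$ to $p$ after normalizing the lowest-priority payment to zero, and identifies the result with \eqref{eq:vcg_cont}; your deviation argument for $b(0)=0$ and your orientation check are elaborations of steps the paper simply asserts, and the orientation check follows most cleanly from $\frac{d}{dp}\int_0^p c(q)D(q,s)\,dq=c(p)D(p,s)$.
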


\begin{proof}
From the fee-rate bidding equilibrium, the marginal fee schedule must satisfy \eqref{eq:dbdp}. Integrating that
marginal condition from $0$ to $p$ and normalizing so that the lowest-priority transaction pays zero yields
\eqref{eq:vcg_cont}. But \eqref{eq:vcg_cont} is exactly the VCG externality payment for priority, as derived from the
delay-externality logic (finite-$n$ formula \eqref{eq:vcg_discrete} and its continuous approximation).
Therefore the equilibrium payments coincide with VCG.
\end{proof}

\subsection{Connection to the approximation in eqs.\ (4)--(5)}
\label{app:connect_eq4_eq5}

Equation \eqref{eq:vcg_cont} is an exact (continuous) VCG payment formula. For estimation, Section~3.3 uses the local
approximation (eq.\ (4) in the main text). One way to see the link is via a mean-value approximation of the integral:
there exists $\tilde p\in(0,p)$ such that
\[
\int_{0}^{p} c(q) D(q,s)\,dq
= c(\tilde p) D(\tilde p,s)\cdot p.
\]
Bundling $p$ and any percentile-to-displacement scaling into $\kappa(s)$ yields the approximation
\begin{equation}
b_i \approx \kappa(s)\, c_i \, D(p_i,s)\,\mathrm{Weight}_i,
\label{eq:eq4_like}
\end{equation}
which matches eq.\ (4) (with $D$ corresponding to the positive local slope denoted by $W_p$ in the main text).
Taking logs yields the additive decomposition in eq.\ (5).









\section{Identification and inference}
\label{app:identification}

This appendix formalizes identification of the fee decomposition in eq.~(5) and the two-stage estimator in eqs.~(9)--(13). The key empirical challenge is that the delay-technology term $\log W_p(p_{it},s_t)$ in eq.~(5) is not observed. Our strategy is to (i) estimate the priority--delay mapping from delay outcomes and recover an
observation-level estimate of the local slope of the delay schedule, and then (ii) use that estimated slope as a control in the fee equation to identify the mapping from the impatience proxy into the latent delay cost.

\paragraph{Targets and estimands.}
Define $x_{it}$ as the vector of first-stage predictors in eq.~(9):

\[x_{it}\equiv\Big(p_{it},\hat\rho_t,\mathrm{Blockspace}_t,\log\mathrm{Weight}_{it}, \mathrm{RBF}_{it},\mathrm{CPFP}_{it},\mathrm{Type}_{it}\Big),\]

and let $g_0(x)\equiv \mathbb{E}[W_{it}\mid x_{it}=x]$ denote the population delay technology. Given an increment $\Delta$ (eq.~(11)), define the associated population finite-difference slope (the object our estimator targets with $\Delta$ fixed):

\begin{equation}
W'_{0,it}(\Delta)\;\equiv\;\frac{g_0(p_{it}+\Delta,z_{it})-g_0(p_{it} \Delta,z_{it})}{2\Delta},
\label{eq:Wprime_theta}
\end{equation}

where $z_{it}$ stacks the non-$p$ elements of $x_{it}$ and $p_{it}\pm\Delta$ are interpreted with the same clipping to $[0,1]$ used in eq.~(11).

The second-stage estimand is the best linear/spline approximation to the preference component $\log c_{it}$ in eq.~(5) using the impatience proxy $\iota_{it}$ (eq.~(7)), conditional on the delay technology proxy $\log W'_{0,it}(\Delta)$, observed controls, and epoch fixed effects:

\begin{equation}
\log b_{it} = \alpha_0 + B(\iota_{it})'\delta + \alpha_1 \log W'_{0,it}(\Delta) + X_{it}'\beta + \eta_t
+ \varepsilon_{it}.
\label{eq:secondstage_population}
\end{equation}

Here $B(\iota_{it})$ is the cubic B-spline basis in eq.~(13), and $X_{it}$ is the control set in eq.(12) (including $\log\mathrm{Weight}_{it}$ and type/wallet indicators).

\subsection*{Maintained assumptions}

\paragraph{A1 (Priority sufficiency).}
Within an epoch, conditional on the covariates in $x_{it}$, transaction priority is summarized by the fee-rate percentile $p_{it}$ defined in eq.~(8), in the sense that $\mathbb{E}[W_{it}\mid r_{it},x_{it}] = \mathbb{E}[W_{it}\mid x_{it}]$. Intuitively, once we condition on the percentile index and the RBF/CPFP/package-related controls, the remaining level of the fee rate does not shift expected delay except through its rank.

\paragraph{A2 (Delay technology exogeneity).} The first-stage error $u_{it}$ in eq.~(9) satisfies $\mathbb{E}[u_{it}\mid x_{it}]=0$.
Equivalently, conditional on priority and predetermined state, residual delay variation is mean-zero block-arrival randomness and other orthogonal shocks.

\paragraph{A3 (Smoothness and positivity).}
For each epoch state, $g_0(p,z)$ is continuous in $(p,z)$ and locally smooth in $p$ on the support used in estimation, and $W'_{0,it}(\Delta)>0$ with probability one on the trimmed sample so that
$\log W'_{0,it}(\Delta)$ is well defined.

\paragraph{A4 (Atomistic transactions).}
Each transaction is small relative to the mempool within its epoch. Consequently, a single transaction’s unobserved fee component does not shift the technology $g_0(\cdot)$ or its local slope, and the realized
pair $(W_{it},x_{it})$ can be treated as generated by the environment rather than by the transaction’s idiosyncratic second-stage shock.

\paragraph{A5 (Preference proxy restriction).}
The impatience proxy $\iota_{it}$ shifts delay costs but is conditionally mean independent of the second-stage disturbance:

\[\mathbb{E}[\varepsilon_{it} \mid B(\iota_{it}), \log W'_{0,it}(\Delta), X_{it}, \eta_t]=0.\]

This is the formal content of interpreting $\iota_{it}$ as a proxy for the latent delay cost component in eq.(5), after controlling for the delay technology and observables.

\paragraph{A6 (First-stage consistency with cross-fitting).}
Let $\widehat g^{(-k)}(\cdot)$ be the cross-fitted first-stage estimator defined in eq.~(10), and let $\widehat W'_{it}$ be the finite-difference slope in eq.~(11). Then, as the sample grows,

\[ \mathbb{E}\big[\big(\widehat g^{(-k)}(x_{it})-g_0(x_{it})\big)^2\big]\to 0\quad \text{and} \quad
\mathbb{E}\big[\big(\widehat W'_{it}-W'_{0,it}(\Delta)\big)^2\big]\to 0,\]

with the cross-fitting construction ensuring that first-stage estimation error is not mechanically fitted to the held-out observation's outcome $W_{it}$.\footnote{This is the standard motivation for cross-fitting and for treating second-stage regressions with generated ML regressors; see, e.g., \citep{pagan1984generated} and the cross-fitting discussion in \citep{chernozhukov2018dml}.}

\subsection{Identification}

\paragraph{Proposition 1 (Population identification).} Under A1--A5 and a standard full-rank condition for the regressors in \eqref{eq:secondstage_population}, the parameter vector $\theta_0\equiv(\alpha_0,\delta,\alpha_1,\beta)$ is uniquely identified as the minimizer of the population least squares criterion:

\[ \theta_0 = \arg\min_{\theta}\; \mathbb{E}\left[ \left(\log b_{it}-\alpha_0 -B(\iota_{it})'\delta -
\alpha_1 \log W'_{0,it}(\Delta) - X_{it}'\beta -\eta_t\right)^2\right].\]

Moreover, $\theta_0$ identifies the mapping from the impatience proxy into the latent delay-cost component in eq.~(5) up to the approximation error inherent in (i) the local pricing approximation in eq.(4) and
(ii) the finite-difference slope definition in \eqref{eq:Wprime_theta} for fixed $\Delta$.

\textbf{Discussion.} Proposition 1 formalizes why the two stages are not redundant. Stage 1 identifies the technology object $W'_{0,it}(\Delta)$ from delay outcomes (eqs.~(9)--(11)); stage 2 then identifies the
preference mapping $B(\iota_{it})'\delta$ in the fee equation conditional on this technology control. Absent stage 1, variation in fees induced by changes in the steepness of the priority--delay schedule would be absorbed by the impatience mapping, confounding preferences and technology.

\subsection*{Consistency of the two-stage estimator}

Let $\widehat\theta$ denote the OLS estimator from the feasible second-stage equation (12) that replaces
$\log W'_{0,it}(\Delta)$ with $\log \widehat W'_{it}$.

\paragraph{Proposition 2 (Consistency of the feasible two-stage estimator).} Under A1--A6 and standard regularity conditions for OLS with epoch fixed effects and epoch-level clustering,

\[ \widehat\theta \xrightarrow{p} \theta_0. \]
That is, using $\log \widehat W'_{it}$ in place of the infeasible $\log W'_{0,it}(\Delta)$ does not change
the probability limit of the second-stage estimator.

\textbf{Informal argument.} Define the infeasible OLS estimator $\widetilde\theta$ that uses the true regressor $\log W'_{0,it}(\Delta)$. By A5, $\widetilde\theta$ is consistent for $\theta_0$ by standard partialling-out arguments (with $\eta_t$). By A6 and the continuous mapping theorem, $\log \widehat W'_{it}-\log W'_{0,it}(\Delta)\to 0$ in mean square on the estimation sample (after trimming where needed). Cross-fitting ensures the first-stage error is not constructed from the same observation’s realized delay in a way that would induce mechanical correlation with the second-stage disturbance. Under these conditions, the feasible and infeasible normal equations converge to the same population moments, implying $\widehat\theta-\widetilde\theta=o_p(1)$ and hence $\widehat\theta \xrightarrow{p} \theta_0$.

\paragraph{Why this avoids ``fee rate on both sides'' bias.} The second stage does not include the fee rate $r_{it}=b_{it}/\mathrm{Weight}_{it}$ (or the percentile $p_{it}$) as a regressor. Including $r_{it}$ would create a mechanical relationship with $\log b_{it}$ because $\log r_{it}=\log b_{it}-\log\mathrm{Weight}_{it}$. Instead, the second stage controls for $\log \widehat W'_{it}$, which is constructed from the \emph{delay} equation and summarizes the technological constraint faced by a price-taking transactor (A4). This is precisely the control implied by eq.~(5).

\subsection{Inference}

We report standard errors clustered at the epoch level to allow arbitrary within-epoch dependence. To incorporate first-stage estimation uncertainty, we additionally report an epoch-block bootstrap that resamples epochs and re-estimates both stages within each resample.


\section{Additional robustness checks}
\label{app:spline}

\subsection{Spline specification}
\label{app:spline:spec}

As a first robustness check, we re-estimate the Stage~2 regression
replacing the linear specification in equation~\eqref{eq:intensive} with a
monotone I-spline regression in the impatience proxy
$\iota_{it} = 1/(\text{min respend blocks}_{it} + \varepsilon)$.
The spline uses seven basis functions (cubic degree, five quantile-based knot
locations), and the basis coefficients are constrained to be weakly
non-negative so that the estimated impatience mapping is weakly increasing.
All other components of the specification are unchanged: the regressors still
include $\log \widetilde{W}'_{it}$, the full set of transaction controls,
block-state controls, and epoch fixed effects, with standard errors clustered
at the epoch level.

\paragraph{Main result.}
The structural coefficient on the delay gradient is essentially unchanged in
the spline specification:
\[
  \widehat{\alpha}_1 = -0.0455
  \qquad
  (\text{clustered SE } = 0.0008,\; t = -57.3).
\]

\begin{figure}[H]
    \centering
    \includegraphics[width=\linewidth]{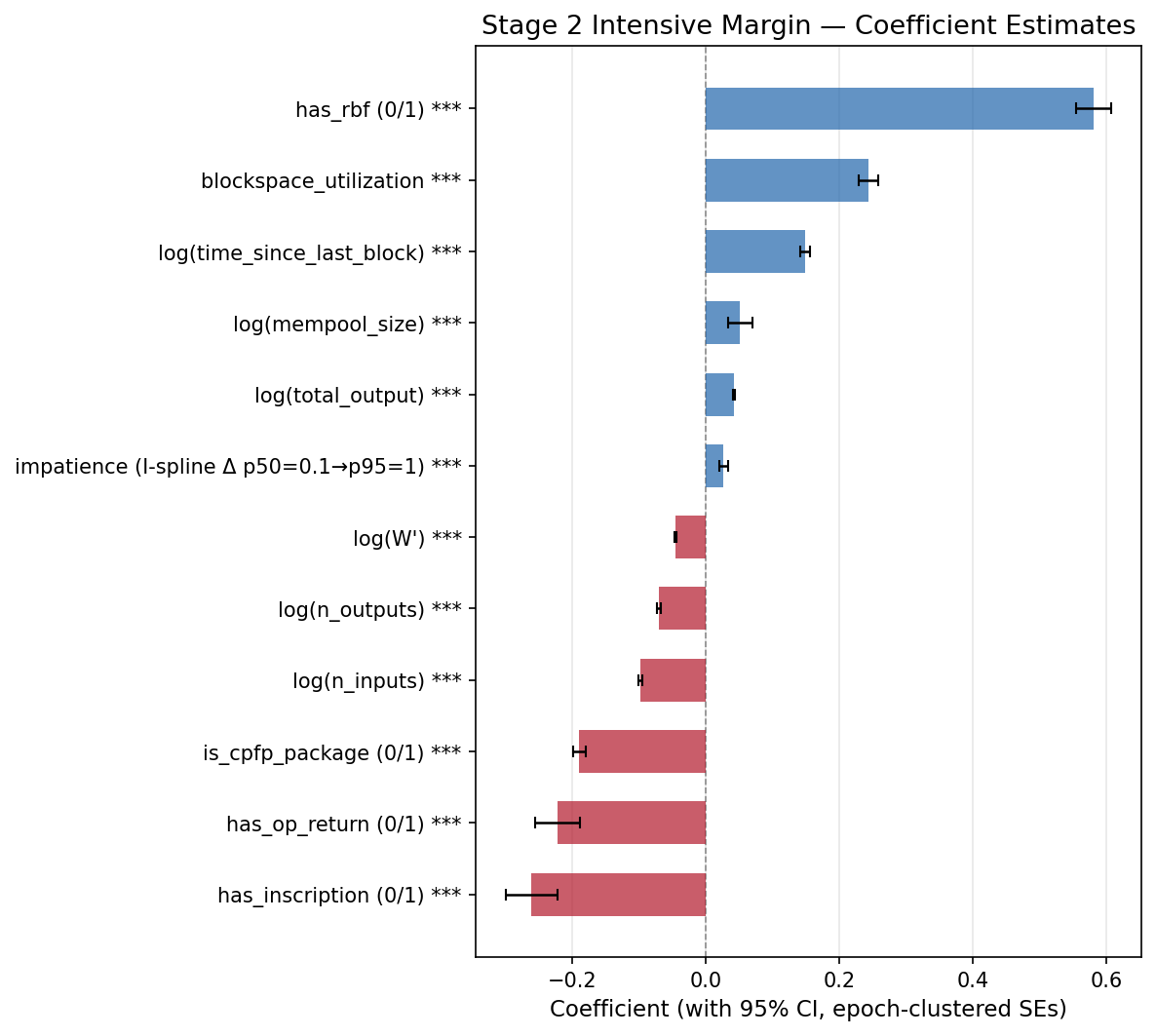}
    \caption{Second-stage coefficient estimates with $95\%$ confidence
             intervals (with spline).  Blue bars indicate
             positive effects on log fee rate; red bars indicate negative
             effects. Impatience has little to no effect. }
    \label{fig:spline}
  \end{figure}

This is nearly identical to the baseline linear specification, indicating that
the core VCG channel is not an artefact of imposing a linear impatience term or
omitting impatience entirely.

\paragraph{Spline fit and aggregate effect.}
The spline specification attains in-sample $R^2 = 0.4382$ with a Duan
smearing factor of $1.3412$, a modest improvement over the baseline fit.
To summarise the impatience effect in a way comparable to the blockspace spline,
we aggregate the I-spline basis via the delta method and report the implied
change in $\log(\text{fee rate})$ when impatience moves from its sample median
to its 95th percentile.  This yields
\[
  \Delta \log(\text{fee rate})
  \big|_{\iota:\,p50=0.1 \rightarrow p95=1}
  = 0.0270
  \qquad
  (\text{SE } = 0.0032,\; p < 0.001).
\]
Thus, higher impatience is associated with a statistically significant but
economically modest increase in fee rates, conditional on the delay gradient
and the full set of controls.

\begin{figure}[H]
    \centering
    \includegraphics[width=\linewidth]{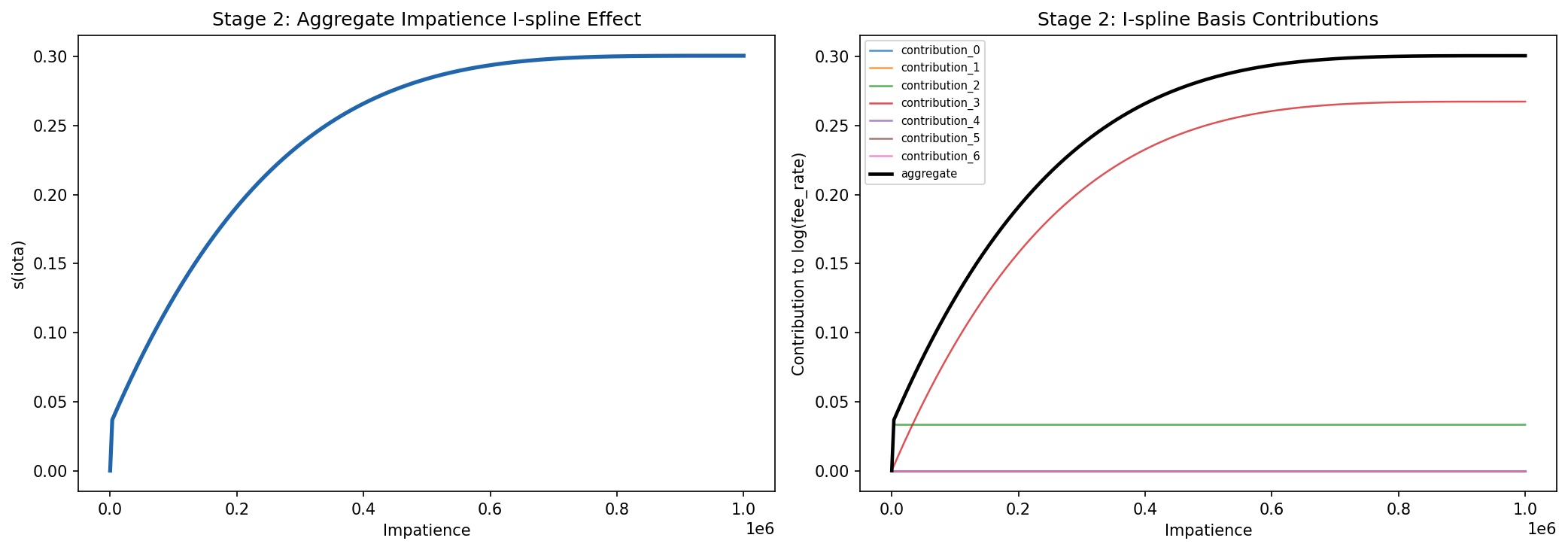}
    \caption{Aggregate impatience effect. Left graph is the sum of each of the 7 spline contributions as seen in the right graph. }
    \label{fig:spline}
  \end{figure}

\paragraph{Interpretation.}
The estimated spline is monotone increasing, as imposed by the non-negativity
constraints on the I-spline coefficients, and the fitted shape is concentrated
in the interior of the impatience distribution rather than at the extreme tail.
Most importantly, allowing a flexible non-parametric impatience mapping does
not materially alter the estimate on $\log \widetilde{W}'_{it}$.
The main structural conclusion therefore survives this richer specification:
cross-transaction variation in fees is still primarily explained by the delay
gradient channel, while impatience contributes a smaller second-order
adjustment.

\subsection{Alternative first-stage learner: monotone XGBoost}
\label{app:spline:xgb}

As a second robustness check, we replace the Random Forest first stage with a
monotone gradient-boosted tree learner (XGBoost) that imposes the required
weakly decreasing restriction in priority $p_{it}$ by construction while still
allowing flexible interactions with congestion variables.  Using the resulting
delay-gradient estimates $\widehat W'^{\,\mathrm{MGB}}_{it}$, we re-estimate the
same Stage~2 spline specification.

The full-sample monotone-XGBoost run uses the same \num{11110277}
observations and \num{1988} epochs as the Random-Forest benchmark, making the
comparison directly like-for-like.  In Stage~1, the monotone XGBoost attains a
held-out test fit of $R^2 = 0.5653$ (RMSE $= 0.8579$), somewhat below the
Random Forest benchmark.  Nonetheless, the Stage~2 spline specification fitted
on the resulting $\widehat W'^{\,\mathrm{MGB}}_{it}$ achieves an in-sample
$R^2 = 0.6856$ with a smearing factor of $1.1687$.

\paragraph{Structural coefficient comparison.}
The structural coefficient on the delay gradient remains negative and highly
statistically significant under the monotone-XGBoost first stage,
\[
  \widehat{\alpha}_1^{\,\mathrm{MGB}} = -0.0861
  \qquad
  (\text{clustered SE } = 0.0005,\; t = -185.0).
\]
Relative to the Random-Forest spline specification
($\widehat{\alpha}_1 = -0.0455$), the coefficient is substantially more
negative in magnitude, though its sign and high precision are unchanged.  The
qualitative conclusion of the model therefore survives: the delay-gradient
channel remains the dominant structural determinant of within-epoch fee
dispersion, but the exact magnitude of the effect is sensitive to the
first-stage learner used to recover $\widehat W'_{it}$.

\paragraph{Impatience spline comparison.}
Aggregating the I-spline basis via the same delta-method contrast as above
gives
\[
  \Delta \log(\text{fee rate})
  \big|_{\iota:\,p50=0.1 \rightarrow p95=1}
  = 0.0178
  \qquad
  (\text{SE } = 0.0023,\; p < 0.001).
\]
This is smaller than the corresponding Random-Forest estimate
($0.0270$ with clustered SE $0.0032$), but it remains positive and
statistically significant.  Hence the monotone-XGBoost first stage preserves
the qualitative shape of the estimated impatience mapping while attenuating its
aggregate magnitude.

\paragraph{Interpretation.}
Taken together, the two first-stage learners deliver the same broad qualitative
message: the impatience spline is increasing, the impatience effect is positive
but modest, and the delay-gradient coefficient is negative and precisely
estimated.  The main difference is quantitative rather than qualitative.
Relative to the Random-Forest first stage, the monotone-XGBoost first stage
produces a steeper estimated structural response to $\log \widetilde{W}'_{it}$
and a somewhat smaller aggregate impatience effect.  We therefore interpret the
core economic mechanism as robust, while treating the exact coefficient
magnitude as first-stage-specification-sensitive.


\begin{thebibliography}{9}
\providecommand{\natexlab}[1]{#1}
\providecommand{\url}[1]{\texttt{#1}}
\expandafter\ifx\csname urlstyle\endcsname\relax
  \providecommand{\doi}[1]{doi: #1}\else
  \providecommand{\doi}{doi: \begingroup \urlstyle{rm}\Url}\fi

\bibitem[Chernozhukov et~al.(2018)Chernozhukov, Chetverikov, Demirer, Duflo,
  Hansen, Newey, and Robins]{chernozhukov2018dml}
Victor Chernozhukov, Denis Chetverikov, Mert Demirer, Esther Duflo, Christian
  Hansen, Whitney Newey, and James Robins.
\newblock Double/debiased machine learning for treatment and structural
  parameters.
\newblock \emph{The Econometrics Journal}, 21\penalty0 (1):\penalty0 C1--C68,
  2018.
\newblock \doi{10.1111/ectj.12097}.

\bibitem[Clarke(1971)]{clarke1971multipart}
Edward~H. Clarke.
\newblock Multipart pricing of public goods.
\newblock \emph{Public Choice}, 11\penalty0 (1):\penalty0 17--33, 1971.
\newblock \doi{10.1007/BF01726210}.

\bibitem[Easley et~al.(2019)Easley, O'Hara, and Basu]{Easley2019}
David Easley, Maureen O'Hara, and Soumya Basu.
\newblock From mining to markets: The evolution of bitcoin transaction fees.
\newblock \emph{Journal of Financial Economics}, 134\penalty0 (1):\penalty0
  91--109, 2019.
\newblock ISSN 0304-405X.
\newblock \doi{https://doi.org/10.1016/j.jfineco.2019.03.004}.
\newblock URL
  \url{https://www.sciencedirect.com/science/article/pii/S0304405X19300583}.

\bibitem[Groves(1973)]{groves1973incentives}
Theodore Groves.
\newblock Incentives in teams.
\newblock \emph{Econometrica}, 41\penalty0 (4):\penalty0 617--631, 1973.
\newblock \doi{10.2307/1914085}.

\bibitem[Huberman et~al.(2021)Huberman, Leshno, and Moallemi]{Huberman2021}
Gur Huberman, Jacob~D Leshno, and Ciamac Moallemi.
\newblock {Monopoly without a Monopolist: An Economic Analysis of the Bitcoin
  Payment System}.
\newblock \emph{The Review of Economic Studies}, 88\penalty0 (6):\penalty0
  3011--3040, 03 2021.
\newblock \doi{10.1093/restud/rdab014}.
\newblock URL \url{https://doi.org/10.1093/restud/rdab014}.

\bibitem[Lehar and Parlour(2020)]{lehar2020miner}
Alfred Lehar and Christine~A. Parlour.
\newblock Miner collusion and the bitcoin protocol.
\newblock Technical report, SSRN, March 22 2020.
\newblock URL \url{https://ssrn.com/abstract=3559894}.

\bibitem[M\"oser and B\"ohme(2015)]{Bohme2015}
Malte M\"oser and Rainer B\"ohme.
\newblock Trends, tips, tools: A longitudinal study of bitcoin transaction
  fees.
\newblock In M.~Brenner, N.~Christin, B.~Johnson, and K.~Rohloff, editors,
  \emph{Financial Cryptography and Data Security, 2nd Workshop on BITCOIN
  Research}, volume 8976 of \emph{Lecture Notes in Computer Science}, pages
  19--33. Springer, 2015.

\bibitem[Pagan(1984)]{pagan1984generated}
Adrian Pagan.
\newblock Econometric issues in the analysis of regressions with generated
  regressors.
\newblock \emph{International Economic Review}, 25\penalty0 (1):\penalty0
  221--247, 1984.
\newblock \doi{10.2307/2526417}.

\bibitem[Vickrey(1961)]{vickrey1961counterspeculation}
William Vickrey.
\newblock Counterspeculation, auctions, and competitive sealed tenders.
\newblock \emph{The Journal of Finance}, 16\penalty0 (1):\penalty0 8--37, 1961.
\newblock \doi{10.1111/j.1540-6261.1961.tb02789.x}.

\end{thebibliography}
\end{document}